\begin{document}

\jvol{00} \jnum{00} \jyear{2023} 

\markboth{\rm C.~LIU and  A.D.~CLARK}{\rm GEOPHYSICAL \&  ASTROPHYSICAL FLUID DYNAMICS}

\title{Analysing the impact of bottom friction on shallow water waves over idealised bottom topographies}

\author{Chang Liu ${\dag}$ ${\ddag}$ $^\ast$\thanks{$^\ast$Corresponding author. Email: cliu124@alumni.jh.edu
\vspace{6pt}} and Antwan~D.~Clark ${\dag}$ \\\vspace{6pt} ${\dag}$ Department of Applied Mathematics and Statistics, Johns Hopkins University \\ Baltimore,
MD 21218 USA\\ ${\ddag}$ Department of Physics, University of California, Berkeley \\
Berkeley, CA 94720 USA  \\\vspace{6pt}\received{v4.4 released July 2022} }

\maketitle

\begin{abstract}
	
Analysing the impact of bottom friction on shallow water waves over bottom terrains is important in areas including environmental and coastal engineering as well as the oceanic and atmospheric sciences. However, current theoretical developments rely on making certain limiting assumptions about these flows and thus more development is needed to be able to further generalise this behaviour. This work uses Adomian decomposition method (ADM) to not only develop semi-analytical formulations describing this behaviour, for flat terrains, but also as reverse-engineering mechanisms to develop new closed-form solutions describing this type of phenomena. Specifically, we respectively focus on inertial geostrophic oscillations and anticyclonic vortices with finite escape times in which our results directly demonstrate the direct correlation between the constant Coriolis force, the constant bottom friction, and the overall dynamics. Additionally, we illustrate elements of dissipation-induced instability with respect to constant bottom friction in these types of flows where we also demonstrate the connection to the initial dynamics for certain cases. 
\begin{keywords}
Shallow water flow, Bottom friction, Adomian decomposition method
\end{keywords}

\end{abstract}

\section{Introduction}
\label{sec:introduction}

\par Shallow water equations are widely employed in environmental and coastal engineering applications such as forecasting wave run-up on coastal structures caused by tides, storm surges, hurricanes, and tsunamis \citep{cushman2011introduction}. Understanding the effects of bottom friction is also key to better understand these phenomena. For example, understanding the impact of bottom shear stress has become a key component to understanding dam-break flows as well as how tsunamis propagate within shallow regimes \citep{wang20112d, tinh2021numerical}. Other areas where understanding the effects of bottom friction are important include flows within shallow estuaries \citep{parker1984frictional}, flows around isolated barriers within velocity-deficit regions  \citep{grubivsic1995effect}, and understanding the effects of jets including their various formations \citep{vasavada2005jovian,warneford2017super}. The impact of bottom friction with respect to hydrodynamic stability has also been explored, where earlier works include that of \cite{chen1997absolute} who discovered that
absolute and convective instabilities of plane turbulent wake in shallow water layers can be suppressed by the bottom friction. Recently, \cite{jin2019frictional} discovered that friction leads to instabilities when exploring the effects of discontinuous interfaces in tangential velocities where the authors note that friction-based effects are suppressed for certain Froude numbers.

\par Theorising flow effects due to bottom friction, including developing analytical solutions, are beneficial in the development, understanding, and bench marking numerical simulations of this behaviour. For example, the Shallow Water Analytic Solutions for Hydraulic and Environmental Studies (SWASHES) software library incorporated solutions related to open-channel flows \citep{macdonald1996analysis,macdonald1997analytic} as well as moving boundary shallow water flows \citep{sampson2003moving,sampson2005moving,sampson2007moving}. The steady-state results of \cite{macdonald1996analysis} and \cite{macdonald1997analytic} were employed to bench mark numerical simulations for overland flows \citep{delestre2017fullswof}. 
The work of Samspon {\em et al.} was also used to validate Godunov-based methods \citep{wang20112d,berthon2011efficient,hou20132d}, two-dimensional well-balanced numerical schemes \citep{hou20132d}, and asymptotic preserving schemes \citep{duran2015asymptotic}. Additionally, the works of \cite{macdonald1997analytic} and \cite{sampson2005moving} were used to validate variational data assimilation packages for shallow-water models \citep{couderc2013dassfow}.

\par However, these analytical developments depend on making limiting assumptions describing these flows and thus generalised formulations are needed to robustly describe the impact of bottom friction and Coriolis force on these dynamics \citep{matskevich2019exact}. To help address these limitations, we extend the work of \cite{liu2021ADMSWE} to employ Adomian decomposition method to the shallow-water equations while considering idealised bottom topographies with constant bottom friction, where our main contributions include the following. First, we establish the connections between the theoretical assumptions presented in the works of \cite{thacker1981some} and \cite{matskevich2019exact} which consider the respective cases of with and without bottom friction. Next, we develop new classes of solutions describing inertial oscillations in geostrophic flows and anticyclonic vortices with finite escape times where these results show the direct correlation between the constant Coriolis force, constant friction, and the dynamic behaviour over flat bottom topographies. 

\par The rest of this paper is organized in the following manner. Section \ref{sec:SWE} introduces the Adomian decomposition formulation of the shallow-water model, where we directly show the connection to the fundamental assumptions presented in the aforementioned works. Section \ref{sec:new_exact} provides additional derivations of solutions of shallow water waves over flat bottom terrains with corresponding linear friction, for the cases of inertial oscillations in geostrophic flows and anticyclonic vortices with finite escape times, for both short and long-term behaviour. Section \ref{sec:numerical} presents our numerical experimentation and results. Section \ref{sec:conclusion} concludes this paper, where we highlight some avenues of future exploration. 

\section{Adomian Decomposition Formulation}
\label{sec:SWE}

The non-dimensional form of the rotating shallow-water equations with an idealized bottom topography and a linear friction term are described as
\begin{subequations}
\label{eq:SWE_non_dimen}
\begin{align}
    \frac{\upartial u}{\upartial t}=&-u\frac{\upartial u}{\upartial x}-v\frac{\upartial u}{\upartial y}-\frac{1}{F^2}\frac{\upartial h}{\upartial x}+\bar{f}v-\bar{\tau} u,\\
    \frac{\upartial v}{\upartial t}=&-u\frac{\upartial v}{\upartial x}-v\frac{\upartial v}{\upartial y}-\frac{1}{F^2}\frac{\upartial h}{\upartial y}-\bar{f}u-\bar{\tau} v,\\
    \frac{\upartial h}{\upartial t}=&-\frac{\upartial }{\upartial x}[u(D+h)]-\frac{\upartial }{\upartial y}[v(D+h)],%
\end{align}
\end{subequations}

\noindent  where $u$ and $v$ are the horizontal and vertical velocity components in the $x$ and $y$ directions and $h$ is the free surface height. The horizontal and vertical lengths (including the free surface height) are normalized by their respective length scales, $L_0$ and $H_0$. Furthermore, the velocity components and the time scale $t$ are normalized by $U_0$ and $L_0/U_0$, respectively. This results in the Froude number $F=U_0/\sqrt{gH_0}$, the normalized Coriolis force $\bar{f}=fL_0/U_0$, and the normalized friction component $\bar{\tau} = \tau L_0/U_0$. This work considers idealized bottom topographies defined as

\begin{align}
    D=D_0\left(1-\frac{x^2}{L^2}-\frac{y^2}{l^2}\right),
\label{eq:idealBottomForm}
\end{align}

\noindent where flat bottom terrains are also considered by setting $D_0=0$. The total fluid depth $D+h$, follow the formulations of \citet{thacker1981some} and \citet{shapiro1996nonlinear}, where $D+h=0$ represents a moving shoreline and
$D+h<0$ represents dry regions. It is also important to mention that our explorations in this section consider flow velocities that are linearly varying spatially while the free surface height either varies linearly or in a quadratic fashion. The initial conditions are given by
\begin{align}
\label{eq:SWE_initial}
    u(x,y,0)=u_0(x,y),\;\;
    v(x,y,0)=v_0(x,y),\;\;\text{and}\;\;
    h(x,y,0)=h_0(x,y).
\end{align}

Next, $u$, $v$, and $h$ are decomposed as follows
\begin{align}
\begin{bmatrix}
    u(x,y,t)\\
    v(x,y,t)\\
    h(x,y,t)
\end{bmatrix}=\sum_{n=0}^\infty \begin{bmatrix}
    u_n(x,y,t)\\
    v_n(x,y,t)\\
    h_n(x,y,t)
\end{bmatrix},
    \label{eq:ADM_series}
\end{align}

\noindent where the initial components $(n=0)$ are defined by \eqref{eq:SWE_initial}. Thus, the recurrence relationships ($n\in \mathbb{N}$) to equation \eqref{eq:SWE_non_dimen} are given by

\begin{subequations}
\label{eq:ADM_iter_Linv}
\begin{align}
    u_{n+1}=&-{L}_t^{-1}\left\{A_n\left(u,\frac{\upartial u}{\upartial x}\right)+A_n\left(v,\frac{\upartial u}{\upartial y}\right)+\frac{1}{F^2}\frac{\upartial h_{n}}{\upartial x}-\bar{f}v_n+\bar{\tau} u_n\right\},
    \\
    v_{n+1}=&-{L}_t^{-1}\left\{A_n\left(u,\frac{\upartial v}{\upartial x}\right)+A_n\left(v,\frac{\upartial v}{\upartial y}\right)+\frac{1}{F^2}\frac{\upartial h_{n}}{\upartial y}+\bar{f}u_n+\bar{\tau} v_n\right\},
    \\
    h_{n+1}=&-{L}_t^{-1}\left\{\frac{\upartial }{\upartial x}[A_n(u,h)]+\frac{\upartial }{\upartial y}[A_n(v,h)]+\frac{\upartial }{\upartial x}[u_n D]+\frac{\upartial }{\upartial y}[v_n D]\right\},
\end{align}
\end{subequations}

\noindent where
\begin{equation}
    L_t:=\frac{\upartial (\cdot)}{\upartial t},\;\;\;L_t^{-1}:=\int_{0}^{t}(\cdot){\rm d}t',
\end{equation}

\noindent and the Adomian polynomial representing the quadratic nonlinearity is defined by \citep{adomian2013solving}
\begin{align}
    A_n(u,h):=\sum_{j=0}^n u_{j}h_{n-j}.
    \label{eq:ADM_An}
\end{align}

\noindent From \eqref{eq:ADM_An} the quadratic nonlinear terms, such as $uh$, can be approximated as  
\begin{align}
    uh=\left(\sum_{p}^{\infty}u_p \right)\left(\sum_{q}^{\infty}h_q \right)=\sum_{n}^{\infty}A_n(u,h),
    \label{eq:ADM_An_sum}
\end{align}

\noindent which leads to the following partial sums that satisfy equation \eqref{eq:SWE_non_dimen} 

\begin{equation}
 S_N(u)=\sum_{n=0}^N u_n,\;\;S_N(v)=\sum_{n=0}^N v_n,\;\;\text{and}\;\; S_N(h)=\sum_{n=0}^N h_n.
 \label{eq:partial_sum}
\end{equation}

\par Next, the following results connect the properties of the initial conditions to the behaviours of the true solutions via their partial sums. 

\begin{lemma}
Let $\{u_n(x,y,t)\}$, $\{v_n(x,y,t)\}$, $\{h_n(x,y,t)\}$ be the sequence of decomposed functions of $u$, $v$, and $h$, where their relationship is defined by \eqref{eq:ADM_iter_Linv} (for $n \in \mathbb{N}$) given an ideal parabolic topography \eqref{eq:idealBottomForm}. If the initial conditions $u_0(x,y)$, $v_0(x,y)$, $h_0(x,y)$ are defined such that 

\begin{equation}
    \frac{\upartial^2 u_0(x,y)}{\upartial x^2} = \frac{\upartial^2 v_0(x,y)}{\upartial x^2} = \frac{\upartial^3 h_0(x,y)}{\upartial x^3} = 0,
\label{PartialDerivIC_x_Cond}
\end{equation}

\begin{equation}
    \frac{\upartial^2 u_0(x,y)}{\upartial y^2} = \frac{\upartial^2 v_0(x,y)}{\upartial y^2} = \frac{\upartial^3 h_0(x,y)}{\upartial y^3} = 0,
\label{PartialDerivIC_y_Cond}
\end{equation}

\noindent and 

\begin{equation}
    \frac{\upartial^2 u_0(x,y)}{\upartial xy}=\frac{\upartial^2 v_0(x,y)}{\upartial x \upartial y}= \frac{\upartial^3 h_0(x,y)}{\upartial x^2 \upartial y} = \frac{\upartial^3 h_0(x,y)}{\upartial x\upartial y^2} = 0.
\label{PartialDerivIC_mixedxy_Cond}
\end{equation}

\noindent Then the higher order components $u_n(x,y,t)$, $v_n(x,y,t)$, $h_n(x,y,t)$ ($n \in \mathbb{N}^{+}$) also satisfy the same property, where 

\begin{equation}
    \frac{\upartial^2 u_n(x,y,t)}{\upartial x^2} = \frac{\upartial^2 v_n(x,y,t)}{\upartial x^2} = \frac{\upartial^3 h_n(x,y,t)}{\upartial x^3} = 0,
\label{PartialDeriv_x_Cond}
\end{equation}

\begin{equation}
    \frac{\upartial^2 u_n(x,y,t)}{\upartial y^2} = \frac{\upartial^2 v_n(x,y,t)}{\upartial y^2} = \frac{\upartial^3 h_n(x,y,t)}{\upartial y^3} = 0,
\label{PartialDeriv_y_Cond}
\end{equation}

\noindent and 

\begin{equation}
    \frac{\upartial^2 u_n(x,y,t)}{\upartial x\upartial y}=\frac{\upartial^2 v_n(x,y,t)}{\upartial x\upartial y}= \frac{\upartial^3 h_n(x,y,t)}{\upartial x^2\upartial y} = \frac{\upartial^3 h_n(x,y,t)}{\upartial x\upartial y^2} = 0.
\label{PartialDeriv_mixedxy_Cond}
\end{equation}

\label{ADM:PartialDerivativeCoeffLemma}
\end{lemma}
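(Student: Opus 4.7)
The plan is to proceed by induction on $n$. Observe first that the hypotheses on the initial data have a clean polynomial interpretation: the vanishing second derivatives of $u_0$ and $v_0$ say exactly that $u_0$ and $v_0$ are affine functions of $(x,y)$, while the vanishing third derivatives of $h_0$ say that $h_0$ is a polynomial of total degree at most two in $(x,y)$. Since the topography $D$ in \eqref{eq:idealBottomForm} is itself a polynomial of degree at most two in $(x,y)$, it belongs to the same class as $h_0$. The conclusions \eqref{PartialDeriv_x_Cond}--\eqref{PartialDeriv_mixedxy_Cond} are equivalent to asserting that $u_n,v_n$ remain affine in $(x,y)$ and that $h_n$ remains of degree at most two in $(x,y)$ for every $n$, with coefficients now depending on $t$; the base case $n=0$ is immediate from the assumed conditions.

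For the inductive step I would examine the recurrence \eqref{eq:ADM_iter_Linv} term by term and check closure within this polynomial class. In the equations for $u_{n+1}$ and $v_{n+1}$, the linear pieces $\frac{1}{F^2}\,\partial h_n/\partial x$, $\bar{f}v_n$ and $\bar{\tau}u_n$ are immediately affine in $(x,y)$, because a single $x$- or $y$-derivative drops a degree-two polynomial to an affine function. The Adomian contributions $A_n(u,\partial u/\partial x)=\sum_{j=0}^n u_j\,\partial u_{n-j}/\partial x$ and $A_n(v,\partial u/\partial y)$ are finite sums of products of an affine factor ($u_j$ or $v_j$) with a factor that is constant in $(x,y)$ (the derivative of an affine function), so each summand, and hence the full sum, is again affine in $(x,y)$. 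Since $L_t^{-1}$ integrates only in $t$, it leaves the $(x,y)$-polynomial structure intact, and $u_{n+1},v_{n+1}$ inherit the affine form.

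The $h_{n+1}$ equation is the delicate case. Here $A_n(u,h)=\sum_{j=0}^n u_j h_{n-j}$ is a finite sum of products of an affine function $u_j$ with a degree-two polynomial $h_{n-j}$, hence a polynomial of total degree at most three in $(x,y)$; applying the outer $\partial/\partial x$ brings the degree down to two. The symmetric piece $\partial[A_n(v,h)]/\partial y$ behaves identically. The forcing contributions $\partial(u_n D)/\partial x$ and $\partial(v_n D)/\partial y$ are handled in exactly the same way because $D$ itself has degree at most two, so $u_n D$ and $v_n D$ are again at most cubic and one derivative reduces them to quadratic; this step is precisely where the choice of idealised parabolic topography matches the polynomial class generated by the initial data. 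The full integrand for $h_{n+1}$ is therefore of degree at most two in $(x,y)$, and $L_t^{-1}$ preserves this, yielding an $h_{n+1}$ that is quadratic in $(x,y)$.

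The only nontrivial bookkeeping, and the step I expect to be the main obstacle, is this degree count for the bilinear Adomian polynomials together with the cross terms involving $D$; once closure at degree one for $u_{n+1},v_{n+1}$ and degree two for $h_{n+1}$ has been verified, translating back into the derivative form \eqref{PartialDeriv_x_Cond}--\eqref{PartialDeriv_mixedxy_Cond} is immediate by differentiating the resulting polynomial expressions twice or three times as appropriate, and the induction closes.
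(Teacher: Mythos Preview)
Your proposal is correct and follows essentially the same inductive strategy as the paper's proof: both verify that the recurrence \eqref{eq:ADM_iter_Linv} propagates the structural constraints from level $n$ to level $n+1$. The paper carries this out by directly differentiating the recurrence and checking that the relevant second and third derivatives vanish, whereas you repackage the same computation as a polynomial-degree closure argument (affine $u_n,v_n$ and quadratic $h_n$); the two are equivalent, and your degree-counting formulation is arguably the cleaner way to see why the Adomian bilinear terms and the $D$-forcing terms stay in the right class.
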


\begin{proof}
See Appendix \ref{proof_ADM:PartialDerivativeCoeffLemma}. 
\end{proof}

\begin{theorem}
\label{thm:ADM_Ansatz_Connection}
Let $\{u_n(x,y,t)\}$, $\{v_n(x,y,t)\}$, $\{h_n(x,y,t)\}$ be the sequence of decomposed functions of $u$, $v$, and $h$, where their relationship is defined by \eqref{eq:ADM_iter_Linv} (for $n \in \mathbb{N}$) given an ideal parabolic topography \eqref{eq:idealBottomForm}. If the initial conditions $u_0(x,y)$, $v_0(x,y)$, $h_0(x,y)$ are defined as \eqref{PartialDerivIC_x_Cond}  - \eqref{PartialDerivIC_mixedxy_Cond},  then the solutions of $u$, $v$, and $h$ have the same property where 

\begin{equation}
    \frac{\upartial^2 u(x,y,t)}{\upartial x^2} = \frac{\upartial^2 u(x,y,t)}{\upartial y^2} 
    = \frac{\upartial^2 u(x,y,t)}{\upartial x \upartial y} = 0,
\label{PartialCondition_u}
\end{equation}

\begin{equation}
    \frac{\upartial^2 v(x,y,t)}{\upartial x^2} = \frac{\upartial^2 v(x,y,t)}{\upartial y^2} 
    = \frac{\upartial^2 v(x,y,t)}{\upartial x \upartial y} = 0,
\label{PartialCondition_v}
\end{equation}

\noindent and 

\begin{equation}
    \frac{\upartial^3 h(x,y,t)}{\upartial x^3}
    = \frac{\upartial^3 h(x,y,t)}{\upartial x^2 \upartial y} 
    = \frac{\upartial^3 h(x,y,t)}{\upartial x \upartial y^2} 
    = \frac{\upartial^3 h(x,y,t)}{\upartial y^3} = 0.
\label{PartialCondition_h}
\end{equation}

\noindent Consequently, these solutions can be expressed as 
\begin{equation}
    {u}(x,y,t)=\tilde{u}_0(t)+\tilde{u}_x(t)x+\tilde{u}_y(t)y,
\label{uLinearCondition}
\end{equation}
    
\begin{equation}
{v}(x,y,t)=\tilde{v}_0(t)+\tilde{v}_x(t)x+\tilde{v}_y(t)y,
\label{vLinearCondition}
\end{equation}

\noindent and 

\begin{equation}
  {h}(x,y,t)=\tilde{h}_0(t)+\tilde{h}_x(t)x+\tilde{h}_y(t)y+\frac{1}{2}\tilde{h}_{xx}(t)x^2+\frac{1}{2}\tilde{h}_{yy}(t)y^2+\tilde{h}_{xy}(t)xy,  
\label{hLinearCondition}
\end{equation}

\noindent where the coefficients $\tilde{u}_0(t)$, $\tilde{u}_x(t)$, $\tilde{u}_y(t)$, $\tilde{v}_0(t)$, $\tilde{v}_x(t)$, $\tilde{v}_y(t)$, $\tilde{h}_0(t)$, $\tilde{h}_x(t)$, $\tilde{h}_y(t)$, $\tilde{h}_{xx}(t)$, $\tilde{h}_{yy}(t)$, and $\tilde{h}_{xy}(t)$ are time-dependent.
\end{theorem}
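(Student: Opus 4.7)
The plan is to leverage Lemma \ref{ADM:PartialDerivativeCoeffLemma} directly: the hypotheses of the theorem on the initial data are precisely those of the lemma, so every decomposed component $u_n, v_n, h_n$ with $n \in \mathbb{N}^{+}$ already satisfies the vanishing partial derivative conditions \eqref{PartialDeriv_x_Cond}--\eqref{PartialDeriv_mixedxy_Cond}, while the $n=0$ case holds by assumption. Thus, for each fixed $t$, each $u_n(\cdot,\cdot,t)$ and $v_n(\cdot,\cdot,t)$ is a polynomial of degree at most $1$ in $(x,y)$, and each $h_n(\cdot,\cdot,t)$ is a polynomial of degree at most $2$, since vanishing of all the relevant second-order (respectively third-order) pure and mixed partials characterises polynomials of that degree. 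The task thus reduces to transferring this polynomial structure from the components to the full Adomian series \eqref{eq:ADM_series}.

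Concretely, I would write each component in closed form with time-dependent coefficients, for example $u_n(x,y,t) = \alpha_n(t) + \beta_n(t)\,x + \gamma_n(t)\,y$ and analogous three-coefficient and six-coefficient expansions for $v_n$ and $h_n$ respectively. Summing in $n$ and grouping terms of the same monomial yields the ansatz forms \eqref{uLinearCondition}--\eqref{hLinearCondition} with, for instance, $\tilde{u}_0(t) = \sum_n \alpha_n(t)$ and $\tilde{u}_x(t) = \sum_n \beta_n(t)$, and similarly for the remaining coefficients. The partial derivative conditions \eqref{PartialCondition_u}--\eqref{PartialCondition_h} then follow immediately, because differentiating a polynomial of degree at most $1$ twice in $(x,y)$, or a polynomial of degree at most $2$ thrice, produces zero identically; the term-by-term differentiated series is the zero series, so is its sum.

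The main obstacle I anticipate is the rigorous justification of term-by-term differentiation, since pointwise convergence of $\sum u_n$ to $u$ does not a priori propagate to derivatives. However, the polynomial structure of each component sidesteps this analytic issue entirely: differentiation acts algebraically on a finite-dimensional space of polynomials (dimension $3$ for $u, v$ and dimension $6$ for $h$), so convergence of the finitely many scalar coefficient series $\alpha_n(t), \beta_n(t), \gamma_n(t), \ldots$ -- which is precisely the ADM convergence assumption used throughout this work -- fully pins down the limit as a polynomial of the same fixed degree in $(x,y)$. The analytic question therefore collapses to a coefficient-level identity that is immediate from Lemma \ref{ADM:PartialDerivativeCoeffLemma}, and the explicit forms \eqref{uLinearCondition}--\eqref{hLinearCondition} then come out by simply reading off the pointwise sums of the coefficient sequences.
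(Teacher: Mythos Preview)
Your proposal is correct and follows essentially the same route as the paper: both arguments invoke Lemma~\ref{ADM:PartialDerivativeCoeffLemma} to force the vanishing derivative conditions on every component $u_n,v_n,h_n$ and then pass this structure to the full series \eqref{eq:ADM_series}. The only cosmetic difference is the order in which the two steps are taken. The paper first sums the componentwise derivative identities to obtain \eqref{PartialCondition_u}--\eqref{PartialCondition_h} and then integrates those PDE constraints on $u,v,h$ directly to recover the affine/quadratic forms \eqref{uLinearCondition}--\eqref{hLinearCondition}; you instead integrate at the component level first, sum the resulting finite-dimensional coefficient data, and read off the derivative conditions afterwards. Your ordering has the minor advantage of making the term-by-term differentiation issue disappear explicitly (since the summation happens in a fixed finite-dimensional polynomial space), whereas the paper leaves that step implicit; but the mathematical content is the same.
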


\begin{proof}
Applying Lemma \ref{ADM:PartialDerivativeCoeffLemma} to each component in \eqref{eq:ADM_series} yields \eqref{PartialCondition_u} - \eqref{PartialCondition_h}. From \eqref{PartialCondition_u}, we observe that 

\begin{equation}
   \frac{\upartial^2 u(x,y,t)}{\upartial x^2} = 0 \, \, \, \, \mbox{yields} \, \, \, \, u(x,y,t)=C_1(y,t) x+C_2(y,t),
\label{integration_constant_C1C2_Result} \nonumber
\end{equation}

\noindent where the integration constants, $C_1(y,t)$ and $C_2(y,t)$, are independent of $x$. Similarly, we have 

\begin{equation}
   \frac{\upartial^2 u(x,y,t)}{\upartial x \upartial y} = 0 \, \, \, \, \mbox{yields} \, \, \, \, C_1(y,t) = \tilde{u}_x(t)
\label{integration_constant_2_Result} \nonumber
\end{equation}

\noindent and 

\begin{equation}
\frac{\upartial^2 u(x,y,t)}{\upartial y^2} = 0 \, \, \, \, \mbox{yields} \, \, \, \, C_2(y,t)=\tilde{u}_y(t) y+\tilde{u}_0(t).
\label{integration_constantC2_Result}\nonumber
\end{equation}

\noindent Thus, \eqref{uLinearCondition} is achieved. Similar arguments can be made to achieve \eqref{vLinearCondition} and \eqref{hLinearCondition}, respectively. 
\end{proof}

\noindent Theorem \ref{thm:ADM_Ansatz_Connection} provides a direct correlation between the initial and overall dynamic behaviour, where these results were previously presented as \emph{ansatz solutions} for cases with and without bottom friction \citep{thacker1981some,shapiro1996nonlinear,sampson2003moving,sampson2005moving,sampson2007moving,matskevich2019exact,bristeau2021some}. We also note that this is preserved in the Adomian representations, given by  \eqref{eq:ADM_iter_Linv}, for they correlate to the functional Taylor series expansions about the initial conditions \eqref{PartialDerivIC_x_Cond} - \eqref{PartialDerivIC_mixedxy_Cond}. Hence, it can be shown that the Adomian decompositions equate to the functional Taylor series expansions of previous works that consider constant bottom friction like that of \citet{sampson2003moving}, \citet{sampson2005moving}, \citet{sampson2007moving}, and \citet{matskevich2019exact}.

\section{Nonlinear growing solutions induced by bottom friction}
\label{sec:new_exact}

Next, we use the construction in \eqref{eq:ADM_iter_Linv} to derive novel families of solutions describing inertial geostrophic oscillations and anticyclonic vortices for flat bottom topographies, where $D_0=0$ in \eqref{eq:idealBottomForm} which have a profound effect on oceanic and atmospheric dynamics \citep{vallis2017atmospheric}. In this section, we theorize the impacts of constant bottom friction ($\bar{\tau} \neq 0$) and the Coriolis parameter ($\bar{f}\neq 0$) with respect to this phenomenon. 

\subsection{Inertial geostrophic oscillatory behaviour}
\label{subsec:new_decaying_geostrophic}

\par For these types of flows, our analysis considers the following initial conditions

\begin{itemize}

    \item Condition I
    \begin{equation}
        u_0(x,y)=-\frac{\tilde{h}_y}{F^2\bar{f}},\;\;v_0(x,y)=\frac{\tilde{h}_x}{F^2\bar{f}},\;\;h_0(x,y)=\tilde{h}_x x+\tilde{h}_y y,
        \label{eq:IC_cond_I}
    \end{equation}
    
    \item Condition II
    \begin{equation}
        u_0(x,y)=v_0(x,y)=0,\;\;h_0(x,y)=\tilde{h}_x x+\tilde{h}_y y,
        \label{eq:IC_cond_II}
    \end{equation}
    
    \item Condition III
    \begin{equation}
        u_0(x,y)=v_0(x,y)=0,\;\;h_0(x,y)=\tilde{h}_x x,
        \label{eq:IC_cond_III}
    \end{equation}
    
    \item Condition IV
    \begin{equation}
        u_0(x,y)=v_0(x,y)=0,\;\;h_0(x,y)=\tilde{h}_y y,
        \label{eq:IC_cond_IV}
    \end{equation}
\end{itemize}

\noindent where $\bar{f}\neq 0$ is the constant Coriolis parameter and $\tilde{h}_x$ and $\tilde{h}_y$ are the respective constant free surface gradient in the $x$ and $y$ directions. We begin with the following lemma that describes the connection between the initial behaviour, given by \eqref{eq:IC_cond_I} - \eqref{eq:IC_cond_IV}, and the decomposition of $u$, $v$ and $h$. 

\begin{lemma}
Let $\{u_n(x,y,t)\}$, $\{v_n(x,y,t)\}$, $\{h_n(x,y,t)\}$ be the sequence of decomposed functions of $u$, $v$, and $h$ such that their relationship is defined by \eqref{eq:ADM_iter_Linv}. If $D= 0$ and the initial conditions $u_0(x,y)$, $v_0(x,y)$, $h_0(x,y)$ satisfy the following properties 

\begin{equation}
    \frac{\upartial u_0(x,y)}{\upartial x} = \frac{\upartial v_0(x,y)}{\upartial x} = \frac{\upartial^2 h_0(x,y)}{\upartial x^2} = 0,
\label{PartialDerivIC_x_Cond_sin}
\end{equation}

\begin{equation}
    \frac{\upartial u_0(x,y)}{\upartial y} = \frac{\upartial v_0(x,y)}{\upartial y} = \frac{\upartial^2 h_0(x,y)}{\upartial y^2} = 0,
\label{PartialDerivIC_y_Cond_sin}
\end{equation}

\noindent and 

\begin{equation}
    \frac{\upartial^2 h_0(x,y)}{\upartial x \upartial y} = 0,
\label{PartialDerivIC_mixedxy_Cond_sin}
\end{equation}

\noindent then their higher order components $u_n(x,y,t)$, $v_n(x,y,t)$, $h_n(x,y,t)$ ($n\in \mathbb{N}^{+}$) satisfy the following properties  

\begin{equation}
    \frac{\upartial u_n(x,y,t)}{\upartial x} = \frac{\upartial v_n(x,y,t)}{\upartial x} = \frac{\upartial h_n(x,y,t)}{\upartial x} = 0,
\label{PartialDeriv_x_Cond_sin}
\end{equation}

\noindent and 

\begin{equation}
    \frac{\upartial u_n(x,y,t)}{\upartial y} = \frac{\upartial v_n(x,y,t)}{\upartial y} = \frac{\upartial h_n(x,y,t)}{\upartial y} = 0.
\label{PartialDeriv_y_Cond_sin}
\end{equation}

\label{ADM:PartialDerivativeCoeffLemma_sin}
\end{lemma}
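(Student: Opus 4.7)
The plan is to prove the lemma by strong induction on $n$. The hypotheses \eqref{PartialDerivIC_x_Cond_sin}--\eqref{PartialDerivIC_mixedxy_Cond_sin} are equivalent to saying that $u_0(x,y)$ and $v_0(x,y)$ are spatial constants, while $h_0(x,y)$ is affine, i.e.\ $h_0 = \tilde{h}_0 + \tilde{h}_x x + \tilde{h}_y y$. Combined with $D\equiv 0$, the recursion \eqref{eq:ADM_iter_Linv} drops every term involving $D$, and I expect all higher components to collapse into pure time integrals of $x,y$-independent expressions.

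For the base case $n=1$, I would evaluate \eqref{eq:ADM_iter_Linv} directly. Each Adomian polynomial $A_0(u,\partial_x u)$, $A_0(v,\partial_y u)$, $A_0(u,\partial_x v)$, $A_0(v,\partial_y v)$ is a product of $u_0$ or $v_0$ with a first-order spatial derivative of $u_0$ or $v_0$, which vanishes by \eqref{PartialDerivIC_x_Cond_sin}--\eqref{PartialDerivIC_y_Cond_sin}; the pressure-gradient terms $F^{-2}\partial_x h_0$ and $F^{-2}\partial_y h_0$ reduce to the constants $F^{-2}\tilde{h}_x$ and $F^{-2}\tilde{h}_y$; and the Coriolis and friction pieces are spatial constants. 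For $h_1$, using $D=0$, the divergences $\partial_x(u_0 h_0)$ and $\partial_y(v_0 h_0)$ collapse to $u_0\tilde{h}_x$ and $v_0\tilde{h}_y$ because $u_0, v_0$ are spatially constant. Hence $u_1, v_1, h_1$ depend only on $t$.

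For the inductive step, I would assume \eqref{PartialDeriv_x_Cond_sin}--\eqref{PartialDeriv_y_Cond_sin} hold for all $1\le k\le n$. Expanding $A_n(u,\partial_x u) = \sum_{j=0}^{n} u_j\,\partial_x u_{n-j}$, I would split the sum into three ranges: the endpoint $j=n$ kills $\partial_x u_0 = 0$ by the initial hypothesis; the endpoint $j=0$ kills $\partial_x u_n = 0$ by the inductive hypothesis; and for $0<j<n$, both $u_j$ and $u_{n-j}$ are functions of $t$ alone, so $\partial_x u_{n-j}=0$. The same trichotomy annihilates $A_n(v,\partial_y u)$ and the analogous nonlinear terms in the $v_{n+1}$ recursion. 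The remaining linear terms $F^{-2}\partial_x h_n$, $-\bar{f}v_n$, $\bar{\tau} u_n$ are all $x,y$-independent by the inductive hypothesis, so $u_{n+1}$ and $v_{n+1}$ are time integrals of functions of $t$.

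The $h_{n+1}$ recursion requires the one genuinely delicate observation, which I expect to be the main obstacle: in $A_n(u,h) = \sum_{j=0}^n u_j h_{n-j}$ the endpoint $j=n$ still contains a spatially dependent factor $h_0$, but since $h_0$ is affine we have $\partial_x(u_n h_0) = u_n(t)\,\tilde{h}_x$, a function of $t$ alone. All other indices contribute products in which both factors depend only on $t$ (via $u_0$ constant, the inductive hypothesis for $u_j$ with $j\ge 1$, and the inductive hypothesis for $h_{n-j}$ when $n-j\ge 1$), so their $x$-derivative vanishes. The symmetric argument for $\partial_y[A_n(v,h)]$ then closes the induction and yields $\partial_x h_{n+1}=\partial_y h_{n+1}=0$, completing the proof.
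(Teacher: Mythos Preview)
Your proposal is correct and follows essentially the same inductive strategy as the paper: both arguments induct on $n$ through the recursion \eqref{eq:ADM_iter_Linv}, verify the base case $n=1$ directly, and then propagate spatial independence to $u_{n+1},v_{n+1},h_{n+1}$. The only cosmetic difference is that the paper differentiates the recursion and shows the resulting expressions vanish, whereas you argue that every integrand is a function of $t$ alone; these are equivalent, and in fact your treatment of the $h_{n+1}$ step (isolating the endpoint $j=n$ where $h_0$ is merely affine) is more explicit than the paper's.
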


\begin{proof}
See Appendix \ref{proof_ADM:PartialDerivativeCoeffLemma_sin}. 
\end{proof}

\par From this, we have the following results. 

\begin{theorem}
\label{thm:ADM_order_n_sin}
Let $\{u_n(x,y,t)\}$, $\{v_n(x,y,t)\}$, $\{h_n(x,y,t)\}$ be the sequence of decomposed functions of $u$, $v$, and $h$, where their relationship is defined by \eqref{eq:ADM_iter_Linv}. If $D=0$ and the initial conditions $u_0(x,y)$, $v_0(x,y)$, $h_0(x,y)$ satisfy the properties defined in \eqref{PartialDerivIC_x_Cond_sin}  - \eqref{PartialDerivIC_mixedxy_Cond_sin},  then the solutions $u$, $v$, and $h$ have the following properties 

\begin{equation}
    \frac{\upartial u(x,y,t)}{\upartial x} = \frac{\upartial u(x,y,t)}{\upartial y} = 0,
\label{PartialCondition_u_sin}
\end{equation}

\begin{equation}
    \frac{\upartial v(x,y,t)}{\upartial x} = \frac{\upartial v(x,y,t)}{\upartial y} 
     = 0,
\label{PartialCondition_v_sin}
\end{equation}

\begin{equation}
    \frac{\upartial h(x,y,t)}{\upartial x}=\frac{\upartial h(x,y,0)}{\upartial x},\;\; \frac{\upartial h(x,y,t)}{\upartial y}= \frac{\upartial h(x,y,0)}{\upartial y},
\label{ParticalCondition_h_sinA}
\end{equation}

\noindent and 

\begin{equation}
  \frac{\upartial^2 h(x,y,t)}{\upartial x^2}=\frac{\upartial^2 h(x,y,t)}{\upartial x\upartial y}=\frac{\upartial^2 h(x,y,t)}{\upartial y^2}=0. 
\label{ParticalCondition_h_sinB}
\end{equation}

\noindent Additionally, $u$, $v$, and $h$ are reduced to the following forms 

\begin{equation}
    {u}(x,y,t)=\tilde{u}_0(t),
\label{uLinearCondition_sin}
\end{equation}
    
\begin{equation}
{v}(x,y,t)=\tilde{v}_0(t),
\label{vLinearCondition_sin}
\end{equation}

\noindent and 

\begin{equation}
  {h}(x,y,t)=\tilde{h}_0(t)+\tilde{h}_x x+\tilde{h}_yy,
\label{hLinearCondition_sin}
\end{equation}

\noindent where the coefficients $\tilde{u}_0(t)$, $\tilde{v}_0(t)$, and  $\tilde{h}_0(t)$ are time-dependent, while $\tilde{h}_x$ and $\tilde{h}_y$ are constants. Additionally, \eqref{uLinearCondition_sin} - \eqref{hLinearCondition_sin} satisfy the reduced system of equations

\begin{subequations}
\label{eq:ODE_sin}
\begin{align}
    \frac{\rm d}{{\rm d}t}\tilde{u}_0(t)=& -\bar{\tau} \tilde{u}_0(t)+\bar{f}\tilde{v}_0(t)-\frac{1}{F^2}\tilde{h}_x, \label{eq:ODEsinA} \\
    \frac{\rm d}{{\rm d}t}\tilde{v}_0(t)=& -\bar{f} \tilde{u}_0(t)-\bar{\tau} \tilde{v}_0(t)-\frac{1}{F^2}\tilde{h}_y,\;\; \label{eq:ODEsinB} \\
    \frac{\rm d}{{\rm d}t}\tilde{h}_0(t)=&-\tilde{h}_x\tilde{u}_0(t)  -\tilde{h}_y\tilde{v}_0(t), \label{eq:ODEsinC}
\end{align}
\end{subequations}

\noindent which yields the  solution 

\begin{align}
     \boldsymbol{\psi}(t)=&{\rm e}^{\boldsymbol{A}t}\boldsymbol{\psi}(0)+\int_{0}^t {\rm e}^{\boldsymbol{A}(t-\xi)}\boldsymbol{H}\; {\rm d}\xi,
     \label{eq:solution_decaying_inertial_geostrophic}
\end{align}

\noindent where 

\begin{align}
   \boldsymbol{\psi}(t)=\begin{bmatrix}
   \tilde{u}_0(t)\\
   \tilde{v}_0(t)\\
   \tilde{h}_0(t)
   \end{bmatrix}, \;\; \boldsymbol{\psi}(0)=\begin{bmatrix}
   \tilde{u}_0(0)\\
   \tilde{v}_0(0)\\
   \tilde{h}_0(0)
   \end{bmatrix},\;\;\boldsymbol{A}=\begin{bmatrix}
   -\bar{\tau} & \bar{f} & 0\\
   -\bar{f} & -\bar{\tau} & 0\\
   -\tilde{h}_x & -\tilde{h}_y & 0
   \end{bmatrix},\;\; \boldsymbol{H}=-\frac{1}{F^2}\begin{bmatrix}
   \tilde{h}_x\\
   \tilde{h}_y\\
   0
   \end{bmatrix},
   \nonumber
   \end{align}

\noindent and the components of $\left[{\rm e}^{\boldsymbol{A}t}\right]_{i,j}$ (for $i, j=1,2,3$) are given by
     
     \begin{equation}
         [{\rm e}^{\boldsymbol{A}t}]_{1,1}=[{\rm e}^{\boldsymbol{A}t}]_{2,2}={\rm e}^{-\bar{\tau} t }\cos(\bar{f}t),\nonumber
     \end{equation}
     
     \begin{equation}
        [{\rm e}^{\boldsymbol{A}t}]_{1,2}=-[{\rm e}^{\boldsymbol{A}t}]_{2,1}={\rm e}^{-\bar{\tau} t }\sin(\bar{f}t),  \nonumber 
     \end{equation}
     
     \begin{equation}
         [{\rm e}^{\boldsymbol{A}t}]_{1,3}=[{\rm e}^{\boldsymbol{A}t}]_{2,3}=0,\nonumber
     \end{equation}
     
     \begin{equation}
         [{\rm e}^{\boldsymbol{A}t}]_{3,1}=\frac{1}{\bar{f}^2 + \bar{\tau}^2} [(-\bar{f}\tilde{h}_y+\tilde{h}_x \bar{\tau}){\rm e}^{-\bar{\tau} t}\cos(\bar{f}t) + (-\bar{f}\tilde{h}_x-\tilde{h}_y\bar{\tau}){\rm e}^{-\bar{\tau} t}\sin(\bar{f}t) + \bar{f}\tilde{h}_y - \tilde{h}_x\bar{\tau}],\nonumber
     \end{equation}
     
     \begin{equation}
        [{\rm e}^{\boldsymbol{A}t}]_{3,2}=\frac{1}{\bar{f}^2 + \bar{\tau}^2} [ (\bar{f}\tilde{h}_x+\tilde{h}_y\bar{\tau}){\rm e}^{-\bar{\tau} t} \cos(\bar{f}t) +(-\bar{f}\tilde{h}_y+\tilde{h}_x\bar{\tau}){\rm e}^{-\bar{\tau} t}\sin(\bar{f}t) - \bar{f}\tilde{h}_x - \tilde{h}_y\bar{\tau} ],\nonumber
     \end{equation}
     
     \noindent and
     
     \begin{equation}
    [{\rm e}^{\boldsymbol{A}t}]_{3,3}=1.\nonumber
     \end{equation}

\end{theorem}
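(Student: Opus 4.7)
The plan is to reduce the theorem to three successive steps: first use Lemma \ref{ADM:PartialDerivativeCoeffLemma_sin} to pass from derivative conditions on each $u_n, v_n, h_n$ to derivative conditions on the sums $u, v, h$; then integrate those conditions to recover the explicit polynomial forms \eqref{uLinearCondition_sin}--\eqref{hLinearCondition_sin}; and finally substitute those forms into the original PDE system \eqref{eq:SWE_non_dimen} (with $D=0$) to obtain the ODE system \eqref{eq:ODE_sin}, which I then solve in closed form.

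First I would apply Lemma \ref{ADM:PartialDerivativeCoeffLemma_sin} termwise to each decomposed function $u_n, v_n, h_n$ for $n \in \mathbb{N}^{+}$. Since the initial components satisfy \eqref{PartialDerivIC_x_Cond_sin}--\eqref{PartialDerivIC_mixedxy_Cond_sin} by hypothesis, the lemma guarantees $\partial_x u_n = \partial_y u_n = \partial_x v_n = \partial_y v_n = \partial_x h_n = \partial_y h_n = 0$ for every $n \geq 1$. Summing the series \eqref{eq:ADM_series} and exchanging differentiation with summation then yields \eqref{PartialCondition_u_sin}--\eqref{PartialCondition_v_sin}, while the gradient of $h$ inherits only its $n=0$ contribution, giving \eqref{ParticalCondition_h_sinA}; the second-derivative identities \eqref{ParticalCondition_h_sinB} follow because $\partial_x h_0$ and $\partial_y h_0$ are the constants $\tilde{h}_x$ and $\tilde{h}_y$. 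Integrating \eqref{PartialCondition_u_sin} in $x$ and $y$ forces $u$ to be purely a function of $t$, and likewise for $v$; integrating \eqref{ParticalCondition_h_sinA}--\eqref{ParticalCondition_h_sinB} and using $h_0(x,y) = \tilde{h}_x x + \tilde{h}_y y$ gives \eqref{hLinearCondition_sin}.

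Next, I substitute the forms \eqref{uLinearCondition_sin}--\eqref{hLinearCondition_sin} into \eqref{eq:SWE_non_dimen} with $D=0$. The advective terms $u \partial_x u, v \partial_y u$ etc.\ vanish because $u$ and $v$ are spatially constant, the continuity terms $\partial_x(u(D+h))$ and $\partial_y(v(D+h))$ collapse to $u \partial_x h + v \partial_y h = \tilde{h}_x \tilde{u}_0 + \tilde{h}_y \tilde{v}_0$, and the pressure gradients reduce to $\tilde{h}_x/F^2$ and $\tilde{h}_y/F^2$. Matching the resulting time-dependent expressions produces \eqref{eq:ODE_sin}. The system is linear in $\boldsymbol{\psi} = (\tilde{u}_0, \tilde{v}_0, \tilde{h}_0)^{T}$ with the constant inhomogeneity $\boldsymbol{H}$, so variation of parameters gives \eqref{eq:solution_decaying_inertial_geostrophic}.

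The main obstacle is computing $e^{\boldsymbol{A} t}$ explicitly. I exploit the block-triangular structure of $\boldsymbol{A}$: the upper-left $2\times 2$ block $\boldsymbol{B} = \bigl[\begin{smallmatrix} -\bar{\tau} & \bar{f} \\ -\bar{f} & -\bar{\tau} \end{smallmatrix}\bigr]$ decomposes as $-\bar{\tau} I + \bar{f} J$ with $J = \bigl[\begin{smallmatrix} 0 & 1 \\ -1 & 0 \end{smallmatrix}\bigr]$, and since $I$ and $J$ commute we immediately get $e^{\boldsymbol{B} t} = e^{-\bar{\tau} t}\bigl[\begin{smallmatrix} \cos(\bar{f}t) & \sin(\bar{f}t) \\ -\sin(\bar{f}t) & \cos(\bar{f}t) \end{smallmatrix}\bigr]$, which fixes the four entries $[e^{\boldsymbol{A} t}]_{i,j}$ for $i,j \in \{1,2\}$, and the zero column above the trailing $0$ on the diagonal immediately gives $[e^{\boldsymbol{A} t}]_{1,3} = [e^{\boldsymbol{A} t}]_{2,3} = 0$ and $[e^{\boldsymbol{A} t}]_{3,3} = 1$. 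For the remaining entries $[e^{\boldsymbol{A} t}]_{3,1}$ and $[e^{\boldsymbol{A} t}]_{3,2}$, I would solve the third-row ODE $\tfrac{d}{dt}[e^{\boldsymbol{A} t}]_{3,k} = -\tilde{h}_x [e^{\boldsymbol{A} t}]_{1,k} - \tilde{h}_y [e^{\boldsymbol{A} t}]_{2,k}$ with zero initial condition; this reduces to integrating products $e^{-\bar{\tau} t}\cos(\bar{f}t)$ and $e^{-\bar{\tau} t}\sin(\bar{f}t)$, whose standard antiderivatives produce the denominator $\bar{f}^2 + \bar{\tau}^2$ and the two trigonometric combinations appearing in the stated formulas. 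Careful bookkeeping of signs in these integrals is the only real arithmetic hurdle; otherwise, the argument is mechanical.
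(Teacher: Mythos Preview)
Your proposal is correct and follows essentially the same route as the paper: apply Lemma~\ref{ADM:PartialDerivativeCoeffLemma_sin} termwise, sum to obtain \eqref{PartialCondition_u_sin}--\eqref{ParticalCondition_h_sinB}, integrate to get the explicit forms \eqref{uLinearCondition_sin}--\eqref{hLinearCondition_sin}, substitute into \eqref{eq:SWE_non_dimen} to derive \eqref{eq:ODE_sin}, and write the variation-of-parameters solution \eqref{eq:solution_decaying_inertial_geostrophic}. You actually go further than the paper in justifying the explicit entries of $e^{\boldsymbol{A}t}$ via the block-triangular structure and the third-row ODE, whereas the paper simply records those entries without derivation.
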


\begin{proof}
Applying Lemma \ref{ADM:PartialDerivativeCoeffLemma_sin} to each component in \eqref{eq:ADM_series} yields \eqref{PartialCondition_u_sin} - \eqref{ParticalCondition_h_sinB}. From \eqref{PartialCondition_u_sin}, we observe that 

\begin{equation}
   \frac{\upartial u(x,y,t)}{\upartial x} = 0 \, \, \, \, \mbox{yields} \, \, \, \, u(x,y,t)=C_1(y,t),
\label{integration_constant_C1C2_Result} \nonumber
\end{equation}

\noindent where the integration constant, $C_1(y,t)$, is independent of $x$. Similarly, we have 

\begin{equation}
   \frac{\upartial u(x,y,t)}{ \upartial y} = 0 \, \, \, \, \mbox{yields} \, \, \, \, C_1(y,t) = \tilde{u}_0(t)
\label{integration_constant_2_Result}\nonumber
\end{equation}

\noindent and thus \eqref{uLinearCondition_sin} is achieved. Similar arguments can be made to achieve \eqref{vLinearCondition_sin} and \eqref{hLinearCondition_sin}, respectively. The reduced equation in \eqref{eq:ODE_sin} is obtained by substituting \eqref{uLinearCondition_sin} - \eqref{hLinearCondition_sin} into \eqref{eq:SWE_non_dimen} which can further be expressed as

\begin{align}
     \frac{\rm d}{{\rm d}t}\boldsymbol{\psi}(t)=&\boldsymbol{A}\boldsymbol{\psi}(t)+\boldsymbol{H},
     \label{eq:eqaution_decaying_inertial_geostrophic}
\end{align} 

\noindent where $\boldsymbol{\psi}(0)$ is the vector containing the initial conditions. Solving \eqref{eq:eqaution_decaying_inertial_geostrophic} yields \eqref{eq:solution_decaying_inertial_geostrophic}. 
\end{proof}

\begin{corollary}
Let the initial behaviour over a flat bottom topography $D=0$ with constant normalised Coriolis force $\bar{f}\neq 0$ and friction coefficient $\bar{\tau}>0$ be defined as $u_0(x,y)=\tilde{u}_0(0)$, $v_0(x,y)=\tilde{v}_0(0)$, and $h_0(x,y)=\tilde{h}_0(0)+\tilde{h}_x x+\tilde{h}_y y $. The long-term behaviour is expressed as   

\begin{align}
    u(x,y,t)=&\frac{-\bar{f}\tilde{h}_y-\bar{\tau} \tilde{h}_x}{F^2(\bar{f}^2+\bar{\tau}^2)},\label{eq:inertial_geostrophic_inf_time_u}\\
    v(x,y,t)=&\frac{\bar{f}\tilde{h}_x-\bar{\tau} \tilde{h}_y}{F^2(\bar{f}^2+\bar{\tau}^2)},\label{eq:inertial_geostrophic_inf_time_v}\\
    h(x,y,t)=&\frac{\bar{\tau} (\tilde{h}_x^2+\tilde{h}_y^2)t}{F^2(\bar{f}^2+\bar{\tau}^2)}+\tilde{h}_0(0)+\tilde{h}_x x+\tilde{h}_y y.
    \label{eq:inertial_geostrophic_inf_time_h}
\end{align}
\label{cor:linear_growing}

\end{corollary}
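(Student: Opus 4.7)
The plan is to reduce the problem to the ODE system of Theorem~\ref{thm:ADM_order_n_sin} and extract the $t\to\infty$ asymptotics. First, I would verify that the initial conditions stated in the corollary, namely $u_0(x,y)=\tilde{u}_0(0)$, $v_0(x,y)=\tilde{v}_0(0)$, and $h_0(x,y)=\tilde{h}_0(0)+\tilde{h}_x x+\tilde{h}_y y$, satisfy the hypotheses \eqref{PartialDerivIC_x_Cond_sin}--\eqref{PartialDerivIC_mixedxy_Cond_sin}: all first derivatives of $u_0$ and $v_0$ vanish because they are constant, and all second derivatives of $h_0$ vanish because $h_0$ is affine in $(x,y)$. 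Consequently Theorem~\ref{thm:ADM_order_n_sin} applies, so the solution has the reduced form \eqref{uLinearCondition_sin}--\eqref{hLinearCondition_sin} and the coefficients satisfy the ODE system \eqref{eq:ODE_sin}.

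Next, I would exploit the fact that the $(\tilde{u}_0,\tilde{v}_0)$ subsystem in \eqref{eq:ODEsinA}--\eqref{eq:ODEsinB} is autonomous and its coefficient matrix has eigenvalues $-\bar{\tau}\pm \mathrm{i}\bar{f}$, which lie strictly in the left half-plane since $\bar{\tau}>0$. Therefore every solution converges exponentially to the unique equilibrium, which is obtained by setting the left-hand sides of \eqref{eq:ODEsinA}--\eqref{eq:ODEsinB} to zero and solving the $2\times 2$ linear system
\begin{equation}
\begin{bmatrix} -\bar{\tau} & \bar{f} \\ -\bar{f} & -\bar{\tau} \end{bmatrix}
\begin{bmatrix}\tilde{u}_0^{\infty}\\ \tilde{v}_0^{\infty}\end{bmatrix}
=\frac{1}{F^2}\begin{bmatrix}\tilde{h}_x\\ \tilde{h}_y\end{bmatrix}.\nonumber
\end{equation}
Inverting the matrix (whose determinant is $\bar{\tau}^2+\bar{f}^2$) yields the values in \eqref{eq:inertial_geostrophic_inf_time_u}--\eqref{eq:inertial_geostrophic_inf_time_v}. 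Alternatively, the same conclusion follows directly from the explicit formula \eqref{eq:solution_decaying_inertial_geostrophic}, since $[\mathrm{e}^{\boldsymbol{A}t}]_{i,j}\to 0$ for $i,j\in\{1,2\}$ as $t\to\infty$ and the convolution integral converges to $-\boldsymbol{A}^{-1}\boldsymbol{H}$ in its first two components.

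Finally, for $\tilde{h}_0(t)$ I would integrate \eqref{eq:ODEsinC} directly. Writing $\tilde{u}_0(t)=\tilde{u}_0^{\infty}+\delta u(t)$ and $\tilde{v}_0(t)=\tilde{v}_0^{\infty}+\delta v(t)$, where $\delta u$ and $\delta v$ decay like $\mathrm{e}^{-\bar{\tau} t}$, integration over $[0,t]$ gives
\begin{equation}
\tilde{h}_0(t)=\tilde{h}_0(0)-\bigl(\tilde{h}_x\tilde{u}_0^{\infty}+\tilde{h}_y\tilde{v}_0^{\infty}\bigr)t-\int_{0}^{t}\bigl(\tilde{h}_x\,\delta u(s)+\tilde{h}_y\,\delta v(s)\bigr)\mathrm{d}s.\nonumber
\end{equation}
Substituting the steady-state values produces the desired slope
\begin{equation}
-\tilde{h}_x\tilde{u}_0^{\infty}-\tilde{h}_y\tilde{v}_0^{\infty}=\frac{\bar{\tau}(\tilde{h}_x^2+\tilde{h}_y^2)}{F^2(\bar{f}^2+\bar{\tau}^2)},\nonumber
\end{equation}
after noting that the cross terms in $\bar{f}\tilde{h}_x\tilde{h}_y$ cancel. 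Since $\delta u$ and $\delta v$ are absolutely integrable on $[0,\infty)$ because of the factor $\mathrm{e}^{-\bar{\tau} s}$, the remaining integral is uniformly bounded in $t$ and is absorbed into the transient contribution that the corollary discards when reporting the long-term behaviour. Combining this with the reduced forms \eqref{uLinearCondition_sin}--\eqref{hLinearCondition_sin} yields \eqref{eq:inertial_geostrophic_inf_time_u}--\eqref{eq:inertial_geostrophic_inf_time_h}. The main subtlety, and the step I would write out most carefully, is the clean separation in the third component between the linearly-in-$t$ growing part (driven by the non-decaying steady state of $(\tilde{u}_0,\tilde{v}_0)$) and the bounded oscillatory transient that the corollary suppresses.
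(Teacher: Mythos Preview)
Your proposal is correct and follows essentially the same route as the paper: both reduce to the ODE system of Theorem~\ref{thm:ADM_order_n_sin}, decouple the $(\tilde{u}_0,\tilde{v}_0)$ block, identify the steady state as $-\boldsymbol{A}_{uv}^{-1}\boldsymbol{H}_{uv}$, and then substitute into \eqref{eq:ODEsinC} to obtain the linear-in-$t$ growth of $\tilde{h}_0$. Your treatment of the transient in $\tilde{h}_0(t)$ via the decomposition $\tilde{u}_0=\tilde{u}_0^{\infty}+\delta u$ is in fact a bit more careful than the paper's, which simply passes the limit through the derivative and integrates; otherwise the arguments coincide.
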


\begin{proof}
See Appendix \ref{proof_cor:linear_growing}. 
\end{proof}

Theorem \ref{thm:ADM_order_n_sin} and Corollary \ref{cor:linear_growing} show the correlation between inertial geostrophic flows, in terms of their oscillatory behaviour, the constant Coriolis force $\bar{f}$, and the constant bottom friction $\bar{\tau}$. Theorem \ref{thm:ADM_order_n_sin} shows that the inertial oscillations depend on the constant Coriolis force $\bar{f} \ne 0$, which is indicated by the terms $\sin(\bar{f}t)$ and $\cos(\bar{f}t)$. However, the damping behaviour is directly proportional to the constant bottom friction as indicated by ${\rm e}^{-\bar{\tau} t}$ multiplied by each of these terms. Corollary \ref{cor:linear_growing} describes the limiting behaviour of these flows, where we note that the horizontal velocities reach constant values that are proportional to linear combinations of the free surface gradients in the $x$ and $y$ directions, the constant Coriolis force, the constant friction coefficient, and the Froude number as $t\rightarrow \infty$. Additionally, we observe that the limiting behavior of the free surface height grows linearly with respect to time at a rate that is proportional to $\left[\bar{\tau} (\tilde{h}_x^2+\tilde{h}_y^2)\right]/\left[F^2(\bar{f}^2+\bar{\tau}^2)\right]$. Furthermore, when $\bar{f}=0$ (and $\tilde{h}_x^2+\tilde{h}_y^2 \ne 0$) this rate grows proportionally to $1/\bar{\tau}$. 

\subsection{Anticyclonic vortices with a finite escape time}
\label{subsec:new_growing_anticyclonic_vortices}

For these types of flows, our analysis considers the following initial conditions 

\begin{itemize}
    \item Condition V
    \begin{equation}
        u_0(x,y)=\bar{f}y-\bar{\tau} x,\;\;v_0(x,y)=0,\;\;h_0(x,y)=\tilde{h}_0(0),
        \label{eq:IC_cond_V}
    \end{equation}
    
    \item Condition VI
    \begin{equation}
        u_0(x,y)=0,\;\;v_0(x,y)=-\bar{f}x-\bar{\tau} y,\;\;h_0(x,y)=\tilde{h}_0(0),
        \label{eq:IC_cond_VI}
    \end{equation}

    \item Condition VII
    \begin{equation}
        u_0(x,y)=\bar{f}y -\bar{\tau} x,\;\;v_0(x,y)=-\bar{f}x-\bar{\tau} y,\;\;h_0(x,y)=\tilde{h}_0(0),
        \label{eq:IC_cond_VII}
    \end{equation}
\end{itemize}

\noindent where $\tilde{h}_0$ is the constant free surface height. These describe anticyclonic vortices because the initial vorticity is proportional to the negative of the constant Coriolis parameter. The behaviour of the initial conditions \eqref{eq:IC_cond_V} - \eqref{eq:IC_cond_VI} affect the decomposition of the decomposed functions of $u$, $v$, and $h$ as presented in the following results.

\begin{lemma}
Let $\{u_n(x,y,t)\}$, $\{v_n(x,y,t)\}$, $\{h_n(x,y,t)\}$ be the sequence of decomposed functions of $u$, $v$, and $h$, where their relationship is defined by \eqref{eq:ADM_iter_Linv} (for $n \in \mathbb{N} $) given a flat bottom topography $D=0$. If the initial conditions $u_0(x,y)$, $v_0(x,y)$, $h_0(x,y)$ are defined such that 

\begin{equation}
    u_0(x,y)=\bar{f}y-\bar{\tau} x,
\label{PartialDerivIC_x_Cond_tan_u_fy}
\end{equation}

\begin{equation}
    \frac{\upartial^2 v_0(x,y)}{\upartial x^2}=\frac{\upartial h_0(x,y)}{\upartial x}=0,
\label{PartialDerivIC_x_Cond_tan_u}
\end{equation}

\begin{equation}
    \frac{\upartial^2 v_0(x,y)}{\upartial y^2}=\frac{\upartial h_0(x,y)}{\upartial y}=0,
\label{PartialDerivIC_y_Cond_tan_u}
\end{equation}

\noindent and 

\begin{equation}
    \frac{\upartial^2 v_0(x,y)}{\upartial x\upartial y}=0.
\label{PartialDerivIC_mixedxy_Cond_tan_u}
\end{equation}

\noindent Then the higher order components $u_n(x,y,t)$, $v_n(x,y,t)$, $h_n(x,y,t)$, (for $n\in \mathbb{N}^{+}$) satisfy

\begin{equation}
u_n(x,y,t)=0,
\label{PartialDeriv_x_Cond_tan_u_fy}
\end{equation}

\begin{equation}
    \frac{\upartial^2 v_n(x,y,t)}{\upartial x^2}=\frac{\upartial h_n(x,y,t)}{\upartial x}=0,
\label{PartialDeriv_x_Cond_tan_u}
\end{equation}

\begin{equation}
    \frac{\upartial^2 v_n(x,y,t)}{\upartial y^2}= \frac{\upartial h_n(x,y,t)}{\upartial y}=0,
\label{PartialDeriv_y_Cond_tan_u}
\end{equation}

\noindent and 

\begin{equation}
        \frac{\upartial^2 v_n(x,y,t)}{\upartial x\upartial y}=0.
\label{PartialDeriv_mixedxy_Cond_tan_u}
\end{equation}
\label{ADM:PartialDerivativeCoeffLemma_tan_u}
\end{lemma}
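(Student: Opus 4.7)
My plan is to argue by strong induction on $n\geq 0$, using the recurrence \eqref{eq:ADM_iter_Linv} and the Adomian expansion \eqref{eq:ADM_An}, to show that $u_{n+1}$, $v_{n+1}$, $h_{n+1}$ inherit the target structure. The inductive hypothesis at step $n$ will be that for every $1\leq k\leq n$, (a)~$u_k\equiv 0$; (b)~$v_k$ is affine in the spatial variables, $v_k(x,y,t)=\alpha_k(t)+\beta_k(t)x+\gamma_k(t)y$; and (c)~$h_k$ depends only on $t$. Conditions (a)--(c) at index $n+1$ are equivalent to \eqref{PartialDeriv_x_Cond_tan_u_fy}--\eqref{PartialDeriv_mixedxy_Cond_tan_u} (after one or two integrations), and the $n=0$ data in \eqref{PartialDerivIC_x_Cond_tan_u_fy}--\eqref{PartialDerivIC_mixedxy_Cond_tan_u} supply $u_0=\bar{f}y-\bar{\tau}x$ exactly, $v_0$ affine, and $h_0=\tilde{h}_0(0)$ spatially constant. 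The base case of the induction is $n=0$ (producing $u_1$, $v_1$, $h_1$) and must be checked directly because $u_0$ is not identically zero.

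The first key step is to verify $u_{n+1}=0$. In the Adomian polynomial $A_n(u,\upartial u/\upartial x)=\sum_{j=0}^n u_j\,\upartial u_{n-j}/\upartial x$, every summand with $1\leq j\leq n$ is killed by~(a), while the remaining $j=0$ term $u_0\,\upartial u_n/\upartial x$ vanishes because $u_n\equiv 0$ for $n\geq 1$. In $A_n(v,\upartial u/\upartial y)=\sum_{j=0}^n v_j\,\upartial u_{n-j}/\upartial y$, every summand with $n-j\geq 1$ is killed by~(a), leaving only $v_n\,\upartial u_0/\upartial y=\bar{f}v_n$. The pressure-gradient term $(1/F^2)\,\upartial h_n/\upartial x$ vanishes by~(c), and $\bar{\tau}u_n$ vanishes by~(a). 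Added to the explicit Coriolis drift $-\bar{f}v_n$, the integrand collapses to $\bar{f}v_n-\bar{f}v_n=0$, and $L_t^{-1}$ yields $u_{n+1}=0$. For the base case $n=0$, the identity $u_1=0$ follows by direct computation: $u_0\,\upartial u_0/\upartial x+v_0\,\upartial u_0/\upartial y=\bar{\tau}^2x-\bar{\tau}\bar{f}y+\bar{f}v_0$ will combine with $-\bar{f}v_0+\bar{\tau}u_0=-\bar{f}v_0+\bar{\tau}\bar{f}y-\bar{\tau}^2x$ and $\upartial h_0/\upartial x=0$ to give exactly zero.

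The affine structure of $v_{n+1}$ and the spatial constancy of $h_{n+1}$ follow from the same structural reductions together with the flat-bottom assumption $D=0$, which removes the $u_nD$ and $v_nD$ contributions from the $h$-recurrence. Each summand of $A_n(u,h)$ and $A_n(v,h)$ is a product of a factor that is zero or affine in $(x,y)$ with a factor that is spatially constant, hence is at most affine in $(x,y)$; applying $\upartial/\upartial x$ or $\upartial/\upartial y$ sends it to a spatially constant quantity, and $L_t^{-1}$ preserves spatial constancy, so $h_{n+1}$ has zero first spatial derivatives. An analogous inspection of the $v_{n+1}$ integrand shows it is a sum of (affine)$\times$($t$-only scalar) products together with $\upartial h_n/\upartial y=0$, $\bar{f}u_n$ (which is zero for $n\geq 1$ and affine for $n=0$), and $\bar{\tau}v_n$ (affine); the total is therefore affine in $(x,y)$, and $L_t^{-1}$ keeps it so. The main obstacle I anticipate is the $n=0$ case: the identity $u_1\equiv 0$ rests on a tight three-way cancellation between nonlinear advection, Coriolis, and friction that is tuned to the precise form $u_0=\bar{f}y-\bar{\tau}x$. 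Once this is secured, the property $u_k\equiv 0$ for all $k\geq 1$ prunes the Adomian polynomials so severely that the remainder of the induction is largely bookkeeping.
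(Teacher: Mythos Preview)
Your proposal is correct and follows essentially the same approach as the paper: strong induction on $n$ via the recurrence \eqref{eq:ADM_iter_Linv}, with the base case $n=0$ handled by the direct cancellation $u_0\,\upartial u_0/\upartial x + v_0\,\upartial u_0/\upartial y + \bar{\tau}u_0 - \bar{f}v_0 = 0$ that hinges on the special form $u_0=\bar{f}y-\bar{\tau}x$. Your write-up is in fact more explicit than the paper's, which merely checks the $n=0$ step and then asserts ``continuing this argument for $n\in\{1,2,\ldots,n-1\}$''; your reformulation of the derivative conditions as ``$v_k$ affine, $h_k$ spatially constant'' and your term-by-term pruning of the Adomian polynomials make the inductive step transparent, whereas the paper leaves those details implicit.
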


\begin{proof}
See Appendix \ref{proof_ADM:PartialDerivativeCoeffLemma_tan_u}. 
\end{proof}

\begin{theorem}
\label{thm:ADM_order_n_tan_u}
Let $\{u_n(x,y,t)\}$, $\{v_n(x,y,t)\}$, $\{h_n(x,y,t)\}$ be the sequence of decomposed functions of $u$, $v$, and $h$, where their relationship is defined by \eqref{eq:ADM_iter_Linv} (for $n \in \mathbb{N}$) given a flat bottom topography $D=0$. If the initial conditions $u_0(x,y)$, $v_0(x,y)$, $h_0(x,y)$ are defined as \eqref{PartialDerivIC_x_Cond_tan_u_fy}  - \eqref{PartialDerivIC_mixedxy_Cond_tan_u},  then the solutions of $u$, $v$, and $h$ have the property where 

\begin{equation}
    u(x,y,t)=\bar{f}y-\bar{\tau} x,
\label{PartialCondition_u_tan_u_fy}
\end{equation}

\begin{equation}
    \frac{\upartial^2 v(x,y,t)}{\upartial x^2}=\frac{\upartial^2 v(x,y,t)}{\upartial y^2}=\frac{\upartial^2 v(x,y,t)}{\upartial x\upartial y}=0,
\label{PartialCondition_v_tan_u}
\end{equation}

\noindent and 

\begin{equation}
    \frac{\upartial h(x,y,t)}{\upartial x}=\frac{\upartial h(x,y,t)}{\upartial y}=0.
\label{PartialCondition_h_tan_u}
\end{equation}

\noindent Consequently, these solutions can be expressed as 
\begin{equation}
    {u}(x,y,t)=\bar{f}y-\bar{\tau} x,
\label{uLinearCondition_tan_u}
\end{equation}
    
\begin{equation}
{v}(x,y,t)=\tilde{v}_0(t)+\tilde{v}_x(t)x+\tilde{v}_y(t)y,
\label{vLinearCondition_tan_u}
\end{equation}

\noindent and 

\begin{equation}
  {h}(x,y,t)=\tilde{h}_0(t),
\label{hLinearCondition_tan_u}
\end{equation}

\noindent where the coefficients  $\tilde{v}_0(t)$, $\tilde{v}_x(t)$, $\tilde{v}_y(t)$, and $\tilde{h}_0(t)$ are time-dependent. These coefficients satisfy
\begin{subequations}
     \label{eq:ODE_tan_u}
\begin{align}
    \frac{\rm d}{{\rm d}t}\tilde{v}_0(t)=&-\tilde{v}_0(t) \tilde{v}_y(t)-\bar{\tau} \tilde{v}_0(t),\\
    \frac{\rm d}{{\rm d}t}\tilde{v}_x(t)=&\bar{f}\bar{\tau} -\tilde{v}_x(t)\tilde{v}_y(t),\\
    \frac{\rm d}{{\rm d}t}\tilde{v}_y(t)=&-\bar{f}\tilde{v}_x(t)-\tilde{v}_y(t)^2-\bar{f}^2-\bar{\tau} \tilde{v}_y(t),\;\\
    \frac{\rm d}{{\rm d}t}\tilde{h}_0(t)=&-\tilde{h}_0(t)\tilde{v}_y(t)+\bar{\tau} \tilde{h}_0(t).
\end{align}
\end{subequations}

\end{theorem}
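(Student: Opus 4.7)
The plan is to follow the same template used for Theorem \ref{thm:ADM_Ansatz_Connection}, but adapted to the new partial-derivative structure encoded in Lemma \ref{ADM:PartialDerivativeCoeffLemma_tan_u}. First I would apply that lemma term-by-term to the series representation \eqref{eq:ADM_series}: since $u_n\equiv 0$ for every $n\in\mathbb{N}^{+}$, the full sum collapses to $u=u_0=\bar{f}y-\bar{\tau}x$, yielding \eqref{PartialCondition_u_tan_u_fy} and \eqref{uLinearCondition_tan_u}. Similarly, linearity of differentiation lets me transfer the vanishing-second-derivative conditions on each $v_n$ and the vanishing-first-derivative conditions on each $h_n$ to the corresponding derivatives of $v=\sum v_n$ and $h=\sum h_n$, giving \eqref{PartialCondition_v_tan_u} and \eqref{PartialCondition_h_tan_u}.

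Next I would integrate these pointwise constraints to obtain the explicit polynomial forms. The argument is identical to the one used in the proof of Theorem \ref{thm:ADM_Ansatz_Connection}: $\partial_x^2 v=0$ forces $v=C_1(y,t)x+C_2(y,t)$, then $\partial_x\partial_y v=0$ forces $C_1=\tilde{v}_x(t)$, and finally $\partial_y^2 v=0$ forces $C_2=\tilde{v}_y(t)y+\tilde{v}_0(t)$, producing \eqref{vLinearCondition_tan_u}. For $h$, the two first-derivative constraints reduce $h$ immediately to a function of $t$ alone, \eqref{hLinearCondition_tan_u}.

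Having fixed the ansatz, the last step is to substitute \eqref{uLinearCondition_tan_u}--\eqref{hLinearCondition_tan_u} with $D=0$ into the original system \eqref{eq:SWE_non_dimen} and collect coefficients of $1$, $x$, and $y$. A quick check shows the $u$-momentum equation is identically satisfied: the advective terms $-u\partial_x u-v\partial_y u$ generate the combination $\bar{\tau}(\bar{f}y-\bar{\tau}x)-\bar{f}(\tilde{v}_0+\tilde{v}_x x+\tilde{v}_y y)$ which exactly cancels the rotation--friction contribution $\bar{f}v-\bar{\tau}u$. The $v$-momentum equation contributes one ODE per monomial: the constant part gives $\dot{\tilde{v}}_0=-\tilde{v}_0\tilde{v}_y-\bar{\tau}\tilde{v}_0$, the $x$-coefficient gives $\dot{\tilde{v}}_x=\bar{f}\bar{\tau}-\tilde{v}_x\tilde{v}_y$ (the $\bar{\tau}\tilde{v}_x$ terms from friction and advection cancel), and the $y$-coefficient gives $\dot{\tilde{v}}_y=-\bar{f}\tilde{v}_x-\tilde{v}_y^{\,2}-\bar{f}^{2}-\bar{\tau}\tilde{v}_y$. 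The continuity equation, with $\partial_x u=-\bar{\tau}$ and $\partial_y v=\tilde{v}_y$, collapses to $\dot{\tilde{h}}_0=-\tilde{h}_0(-\bar{\tau}+\tilde{v}_y)=\bar{\tau}\tilde{h}_0-\tilde{h}_0\tilde{v}_y$. Together these yield \eqref{eq:ODE_tan_u}.

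The main obstacle, and the only one that is more than routine, is the coefficient-matching bookkeeping in the $v$-momentum equation: one has to carry all nine monomials produced by $-u\partial_x v-v\partial_y v-\bar{f}u-\bar{\tau}v$ and verify that the apparent linear-in-$x$ term $\bar{\tau}\tilde{v}_x-\bar{\tau}\tilde{v}_x$ cancels so that only the product $\tilde{v}_x\tilde{v}_y$ and the forcing $\bar{f}\bar{\tau}$ survive. Once that cancellation is observed, the derivation of \eqref{eq:ODE_tan_u} is immediate and the theorem follows.
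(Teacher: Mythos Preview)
Your proposal is correct and follows essentially the same route as the paper: apply Lemma \ref{ADM:PartialDerivativeCoeffLemma_tan_u} termwise to the series \eqref{eq:ADM_series} to obtain \eqref{PartialCondition_u_tan_u_fy}--\eqref{PartialCondition_h_tan_u}, integrate those constraints exactly as in Theorem \ref{thm:ADM_Ansatz_Connection} to recover the affine ansatz \eqref{uLinearCondition_tan_u}--\eqref{hLinearCondition_tan_u}, and then substitute into \eqref{eq:SWE_non_dimen} to read off \eqref{eq:ODE_tan_u}. The only difference is that you spell out the coefficient-matching (including the $\bar{\tau}\tilde{v}_x$ cancellation and the check that the $u$-momentum equation is identically satisfied), whereas the paper records only the substitution step without the intermediate bookkeeping.
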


\begin{proof}
Applying Lemma \ref{ADM:PartialDerivativeCoeffLemma_tan_u} to each component in \eqref{eq:ADM_series} yields \eqref{PartialCondition_u_tan_u_fy} - \eqref{PartialCondition_h_tan_u}. From \eqref{PartialCondition_v_tan_u}, we observe that 

\begin{equation}
   \frac{\upartial^2 v(x,y,t)}{\upartial x^2} = 0 \, \, \, \, \mbox{yields} \, \, \, \, v(x,y,t)=C_1(y,t) x+C_2(y,t),
\label{integration_constant_C1C2_Result_tan_u} \nonumber
\end{equation}

\noindent where the integration constants, $C_1(y,t)$ and $C_2(y,t)$, are independent of $x$. Similarly, we have 

\begin{equation}
   \frac{\upartial^2 v(x,y,t)}{\upartial x \upartial y} = 0 \, \, \, \, \mbox{yields} \, \, \, \, C_1(y,t) = \tilde{v}_x(t)
\label{integration_constant_2_Result_tan_u}\nonumber
\end{equation}

\noindent and 

\begin{equation}
\frac{\upartial^2 v(x,y,t)}{\upartial y^2} = 0 \, \, \, \, \mbox{yields} \, \, \, \, C_2(y,t)=\tilde{v}_y(t) y+\tilde{v}_0(t),
\label{integration_constantC2_Result_tan_u}\nonumber
\end{equation}

\noindent and thus \eqref{vLinearCondition_tan_u} is achieved. Similar arguments can be made to achieve \eqref{uLinearCondition_tan_u} and \eqref{hLinearCondition_tan_u}, respectively. The reduced equation in \eqref{eq:ODE_tan_u} is obtained by substituting \eqref{uLinearCondition_tan_u} - \eqref{hLinearCondition_tan_u} into \eqref{eq:SWE_non_dimen}.
\end{proof}

\begin{lemma}
Let $\{u_n(x,y,t)\}$, $\{v_n(x,y,t)\}$, $\{h_n(x,y,t)\}$ be the sequence of decomposed functions of $u$, $v$, and $h$, where their relationship is defined by \eqref{eq:ADM_iter_Linv} (for $n \in \mathbb{N}$) given a flat bottom topography $D=0$. If the initial conditions $u_0(x,y)$, $v_0(x,y)$, $h_0(x,y)$ are defined as

\begin{align}
    v_0(x,y)=-\bar{f}x-\bar{\tau} y,
    \label{PartialDerivIC_x_tan_v_fx}
\end{align}

\begin{align}
  \frac{\upartial^2 u_0(x,y)}{\upartial x^2}=\frac{\upartial h_0(x,y)}{\upartial x}=0,
  \label{PartialDerivIC_x_Cond_tan_v}
\end{align}

\begin{align}
    \frac{\upartial^2 u_0(x,y)}{\upartial y^2}=\frac{\upartial h_0(x,y)}{\upartial y}=0,
    \label{PartialDerivIC_y_Cond_tan_v}
\end{align}

\noindent and

\begin{align}
    \frac{\upartial^2 u_0(x,y)}{\upartial x\upartial y}=0.
    \label{PartialDerivIC_mixedxy_Cond_tan_v}
\end{align}

\noindent Then the higher order components $u_n(x,y,t)$, $v_n(x,y,t)$, $h_n(x,y,t)$, (for $n\in \mathbb{N}^{+}$) satisfy the property

\begin{align}
    v_n(x,y,t)=0,
    \label{PartialDeriv_x_Cond_tan_v_fx}
\end{align}

\begin{align}
  \frac{\upartial^2 u_n(x,y,t)}{\upartial x^2}=\frac{\upartial h_n(x,y,t)}{\upartial x}=0,
  \label{PartialDeriv_x_Cond_tan_v}
\end{align}

\begin{align}
    \frac{\upartial^2 u_n(x,y,t)}{\upartial y^2}=\frac{\upartial h_n(x,y,t)}{\upartial y}=0,
    \label{PartialDeriv_y_Cond_tan_v}
\end{align}

\noindent and

\begin{align}
    \frac{\upartial^2 u_n(x,y,t)}{\upartial x \upartial y}=0.
    \label{PartialDeriv_mixedxy_Cond_tan_v}
\end{align}

\label{ADM:PartialDerivativeCoeffLemma_tan_v}
\end{lemma}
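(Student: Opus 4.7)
The plan is to proceed by induction on $n$, mirroring the strategy of Lemma \ref{ADM:PartialDerivativeCoeffLemma_tan_u} with the roles of $u$ and $v$ interchanged. The hypotheses force $u_0$ to be affine in $(x,y)$, $h_0$ to be spatially constant, and $v_0(x,y)=-\bar{f}x-\bar{\tau}y$ to be a specific affine function. The crucial structural observation driving the argument is that this particular form of $v_0$ is tuned so that the spatial derivatives $\partial v_0/\partial x = -\bar{f}$ and $\partial v_0/\partial y = -\bar{\tau}$ generate precisely the advective contributions needed to cancel the explicit Coriolis term $\bar{f}u_n$ and friction term $\bar{\tau} v_n$ appearing in the $v$-recurrence of \eqref{eq:ADM_iter_Linv}, forcing $v_{n+1}$ to vanish identically.

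For the base case $n=0$, I would directly expand the recurrences for $v_1$, $u_1$, and $h_1$. In the $v_1$ computation, $A_0(u,\partial v/\partial x)=-\bar{f}u_0$ cancels against $\bar{f}u_0$, $A_0(v,\partial v/\partial y)=-\bar{\tau}v_0$ cancels against $\bar{\tau}v_0$, and $\partial h_0/\partial y$ vanishes by hypothesis, so $v_1=0$. For $u_1$, each surviving term in the integrand is at most affine in $(x,y)$; for $h_1$ with $D=0$, the integrand reduces to $\partial_x(u_0 h_0)+\partial_y(v_0 h_0)$, which is spatially constant. Applying $L_t^{-1}$ preserves both spatial structures in $t$, so the required derivative conditions hold at $n=1$.

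For the induction step I will assume that, for all $0\le k\le n$, the conclusions \eqref{PartialDeriv_x_Cond_tan_v_fx}--\eqref{PartialDeriv_mixedxy_Cond_tan_v} hold, with $v_k=0$ for $1\le k\le n$. Two simplifications then do most of the work: (i) in any Adomian polynomial $A_n(u,\partial v/\partial x)$ or $A_n(v,\partial v/\partial y)$, only the single summand containing $v_0$ or its derivative survives, because $v_j=0$ for $1\le j\le n$; (ii) the inductive hypothesis makes every $u_k$ affine in $(x,y)$ and every $h_k$ spatially constant, so products such as $u_j h_{n-j}$ are affine in $(x,y)$ and $(\partial_x u_j)h_{n-j}$ depends on $t$ alone. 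With these simplifications, the integrand of $v_{n+1}$ reduces exactly to the Coriolis/friction cancellation of the base case, while the integrands of $u_{n+1}$ and $h_{n+1}$ remain affine and spatially constant, respectively; the operator $L_t^{-1}$ then transfers all the listed properties to index $n+1$.

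The main obstacle will be careful index book-keeping in the Adomian sums: one must verify the identity $A_n(u,\partial v/\partial x)+\bar{f}u_n\equiv 0$ for general $n$, and similarly that every $v$-dependent contribution to the $u$- and $h$-recurrences collapses onto the single surviving $v_0$ term. Once this cancellation pattern is isolated, the remaining verifications parallel those of Lemma \ref{ADM:PartialDerivativeCoeffLemma_tan_u} in Appendix \ref{proof_ADM:PartialDerivativeCoeffLemma_tan_u} almost term for term, with the roles of the coordinate directions swapped and the sign adjustments dictated by the antisymmetric Coriolis coupling.
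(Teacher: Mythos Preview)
Your proposal is correct and follows essentially the same route as the paper: an induction on $n$ via the recurrences \eqref{eq:ADM_iter_Linv}, using the exact cancellation $u_0\,\partial_x v_0 + v_0\,\partial_y v_0 + \bar{f}u_0 + \bar{\tau}v_0 = 0$ from $\partial_x v_0=-\bar{f}$, $\partial_y v_0=-\bar{\tau}$ to force $v_1=0$, then propagating the affine/constant structure of $u_n$ and $h_n$. If anything, your explicit bookkeeping of the Adomian sums (only the $v_0$ summand surviving once $v_j=0$ for $j\ge 1$) spells out in more detail what the paper's proof leaves to the phrase ``continuing this argument.''
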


\begin{proof}
See Appendix \ref{proof_ADM:PartialDerivativeCoeffLemma_tan_v}. 
\end{proof}

\begin{theorem}
\label{thm:ADM_order_n_tan_v}
Let $\{u_n(x,y,t)\}$, $\{v_n(x,y,t)\}$, $\{h_n(x,y,t)\}$ be the sequence of decomposed functions of $u$, $v$, and $h$, where their relationship is defined by \eqref{eq:ADM_iter_Linv} (for $n \in \mathbb{N} $) given a flat bottom topography $D=0$. If the initial conditions $u_0(x,y)$, $v_0(x,y)$, $h_0(x,y)$ are defined as \eqref{PartialDerivIC_x_tan_v_fx} -   \eqref{PartialDerivIC_mixedxy_Cond_tan_v},  then the solutions of $u$, $v$, and $h$ have the property where 

\begin{equation}
    \frac{\upartial^2 u(x,y,t)}{\upartial x^2} = \frac{\upartial^2 u(x,y,t)}{\upartial y^2} 
    = \frac{\upartial^2 u(x,y,t)}{\upartial x \upartial y} = 0,
\label{PartialCondition_u_tan_v}
\end{equation}

\begin{equation}
    v(x,y,t)=-\bar{f}x-\bar{\tau} y,
    \label{PartialCondition_v_tan_v}
\end{equation}

\noindent and 

\begin{equation}
    \frac{\upartial h(x,y,t)}{\upartial x}
    = \frac{\upartial h(x,y,t)}{\upartial y} = 0.
\label{PartialCondition_h_tan_v}
\end{equation}

\noindent Consequently, these solutions can be expressed as 
\begin{equation}
    {u}(x,y,t)=\tilde{u}_0(t)+\tilde{u}_x(t)x+\tilde{u}_y(t)y,
\label{uLinearCondition_tan_v}
\end{equation}
    
\begin{equation}
    {v}(x,y,t)=-\bar{f}x-\bar{\tau} y,
\label{vLinearCondition_tan_v}
\end{equation}

\noindent and 

\begin{equation}
  {h}(x,y,t)=\tilde{h}_0(t),  
\label{hLinearCondition_tan_v}
\end{equation}

\noindent where the coefficients $\tilde{u}_0(t)$, $\tilde{u}_x(t)$, $\tilde{u}_y(t)$, and $\tilde{h}_0(t)$,  are time-dependent. These coefficients satisfy
\begin{subequations}
     \label{eq:ODE_tan_v}
\begin{align}
    \frac{\rm d}{{\rm d}t}\tilde{u}_0(t)=&-\tilde{u}_0(t) \tilde{u}_x(t)-\bar{\tau} \tilde{u}_0(t), \\
    \frac{\rm d}{{\rm d}t}\tilde{u}_x(t)=&-\tilde{u}_x(t)^2+\bar{f}\tilde{u}_y(t)-\bar{f}^2-\bar{\tau} \tilde{u}_x(t),\\
    \frac{\rm d}{{\rm d}t}\tilde{u}_y(t)=&- \tilde{u}_y(t) \tilde{u}_x(t)-\bar{\tau} \bar{f} ,\\
    \frac{\rm d}{{\rm d}t}\tilde{h}_0(t)=&-\tilde{h}_0(t)\tilde{u}_x(t).
\end{align}
\end{subequations}
\end{theorem}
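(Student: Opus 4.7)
The plan is to mirror the proof of Theorem~\ref{thm:ADM_order_n_tan_u} under the symmetry that swaps the roles of $u$ and $v$ in the ansatz. The strategy proceeds in three stages: first, propagate the initial-condition structure to every Adomian component via Lemma~\ref{ADM:PartialDerivativeCoeffLemma_tan_v}; second, integrate the resulting partial-derivative constraints to obtain the explicit spatial form of $u$, $v$, and $h$; and third, substitute into \eqref{eq:SWE_non_dimen} with $D=0$ and match coefficients of $1$, $x$, and $y$ to extract the ODE system \eqref{eq:ODE_tan_v}.

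Applying Lemma~\ref{ADM:PartialDerivativeCoeffLemma_tan_v} to each component of the series in \eqref{eq:ADM_series} and summing term-wise yields \eqref{PartialCondition_u_tan_v}--\eqref{PartialCondition_h_tan_v}; the pinned identity \eqref{PartialCondition_v_tan_v} is immediate because $v_n \equiv 0$ for every $n \in \mathbb{N}^{+}$, so only the initial component $v_0(x,y) = -\bar{f}x - \bar{\tau}y$ contributes to the series. From $\upartial_x^2 u = 0$ I would obtain $u = C_1(y,t)x + C_2(y,t)$; then $\upartial_x \upartial_y u = 0$ forces $C_1 = \tilde{u}_x(t)$, and $\upartial_y^2 u = 0$ gives $C_2 = \tilde{u}_y(t)\,y + \tilde{u}_0(t)$, producing \eqref{uLinearCondition_tan_v}. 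The constant-in-space form \eqref{hLinearCondition_tan_v} follows at once from $\upartial_x h = \upartial_y h = 0$, and \eqref{vLinearCondition_tan_v} is already in hand.

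Substituting the ansatz with $u_x = \tilde{u}_x$, $u_y = \tilde{u}_y$, $v_x = -\bar{f}$, $v_y = -\bar{\tau}$, $h_x = h_y = 0$ into \eqref{eq:SWE_non_dimen}, the $v$-equation collapses to an identity because the Coriolis pair cancels the advective contribution $-u\,v_x$ and the friction term cancels $-v\,v_y$. The continuity equation reduces to a scalar ODE for $\tilde{h}_0(t)$ driven by $u_x + v_y$. The substantive work is in the $u$-equation, whose right-hand side is a polynomial of degree one in $x,y$; collecting its $1$-, $x$-, and $y$-coefficients against $\tilde{u}_0'(t) + \tilde{u}_x'(t)\,x + \tilde{u}_y'(t)\,y$ produces the first three ODEs of \eqref{eq:ODE_tan_v}.

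The main obstacle is the careful bookkeeping in the $u$-equation, particularly the $y$-coefficient: the advective piece $-v\,u_y = (\bar{f}x + \bar{\tau}y)\tilde{u}_y$ contributes $+\bar{\tau}\tilde{u}_y$, which must exactly cancel the frictional $-\bar{\tau}\tilde{u}_y$ coming from $-\bar{\tau}u$, leaving only the Coriolis--friction cross term $-\bar{f}\bar{\tau}$ that drives $\tilde{u}_y$. At the $x$-coefficient a similar aggregation yields the combination $-\tilde{u}_x^2 + \bar{f}\tilde{u}_y - \bar{f}^2 - \bar{\tau}\tilde{u}_x$, and at the constant level the remaining advective and frictional terms yield $-\tilde{u}_0\tilde{u}_x - \bar{\tau}\tilde{u}_0$. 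Once these cancellations are verified, and the continuity equation is reduced in the same spirit, the system \eqref{eq:ODE_tan_v} emerges and the theorem follows.
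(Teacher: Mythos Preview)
Your proposal is correct and follows essentially the same route as the paper's proof: apply Lemma~\ref{ADM:PartialDerivativeCoeffLemma_tan_v} termwise, integrate the resulting second-order vanishing conditions on $u$ (and first-order on $h$) to recover the affine/constant spatial forms, then substitute into \eqref{eq:SWE_non_dimen} to read off the ODEs. The only difference is that you spell out the coefficient-matching in the $u$-equation (including the $\bar{\tau}\tilde{u}_y$ cancellation at the $y$-coefficient), whereas the paper compresses that substitution into a single sentence.
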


\begin{proof}
Applying Lemma \ref{ADM:PartialDerivativeCoeffLemma_tan_v} to each component in \eqref{eq:ADM_series} yields \eqref{PartialCondition_u_tan_v} - \eqref{PartialCondition_h_tan_v}. From \eqref{PartialCondition_u_tan_v}, we observe that 

\begin{equation}
   \frac{\upartial^2 u(x,y,t)}{\upartial x^2} = 0 \, \, \, \, \mbox{yields} \, \, \, \, u(x,y,t)=C_1(y,t) x+C_2(y,t),
\label{integration_constant_C1C2_Result}\nonumber
\end{equation}

\noindent where the integration constants, $C_1(y,t)$ and $C_2(y,t)$, are independent of $x$. Similarly, we have 

\begin{equation}
   \frac{\upartial^2 u(x,y,t)}{\upartial x \upartial y} = 0 \, \, \, \, \mbox{yields} \, \, \, \, C_1(y,t) = \tilde{u}_x(t)
\label{integration_constant_2_Result} \nonumber
\end{equation}

\noindent and 

\begin{equation}
\frac{\upartial^2 u(x,y,t)}{\upartial y^2} = 0 \, \, \, \, \mbox{yields} \, \, \, \, C_2(y,t)=\tilde{u}_y(t) y+\tilde{u}_0(t),
\label{integration_constantC2_Result} \nonumber
\end{equation}

\noindent and thus \eqref{uLinearCondition_tan_v} is achieved. Similar arguments can be made to achieve \eqref{vLinearCondition_tan_v} and \eqref{hLinearCondition_tan_v}. The reduced equation in \eqref{eq:ODE_tan_v} is obtained by substituting \eqref{uLinearCondition_tan_v} - \eqref{hLinearCondition_tan_v} into \eqref{eq:SWE_non_dimen}. 
\end{proof}
     
\begin{theorem}
\label{thm:anticyclonic_finite_escape_u}
For any flows over flat bottom topography $(D=0)$ with constant Coriolis parameter $\bar{f}\neq 0$ and initial constant free surface height $\tilde{h}_0(0)$, the solution $u$, $v$, and $h$ with respect to their initial conditions are defined as follows. If the initial behaviour is defined by

 \begin{equation}
     u_0(x,y)=\bar{f}\,y-\bar{\tau} \,x,\;\;
     v_0(x,y)=0, \;\;
     h_0(x,y)=\tilde{h}_{0}(0),\label{eq:case51_initial}
     \end{equation}

\noindent then

  \begin{align}
     u(x,y,t)=&\bar{f}\,y-\bar{\tau} \,x,
     \nonumber\\
     v(x,y,t)=&-\bar{f}\,y\,\mathrm{tan}\left(\bar{f}\,t\right)+\bar{\tau} \,x\,\mathrm{tan}\left(\bar{f}\,t\right),
     \nonumber\\
     h(x,y,t)=& \tilde{h}_{0}(0)\mathrm{e}^{\bar{\tau}\, t }\sec \left(\bar{f}\,t\right).
     \label{eq:case51_exact_c}
     \end{align}
     
     \noindent Furthermore, this solution describes anticyclonic vortices with a finite escape time that is based on initial zonal velocity being represented as $u_0(x,y)=\bar{f}y-\bar{\tau} x$.
     
     \end{theorem}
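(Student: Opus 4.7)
The plan is to invoke Theorem \ref{thm:ADM_order_n_tan_u} to reduce the PDE system to the ODE system \eqref{eq:ODE_tan_u}, and then solve those ODEs in closed form. First I would verify that \eqref{eq:case51_initial} satisfies the hypotheses of Lemma \ref{ADM:PartialDerivativeCoeffLemma_tan_u}: the assumption $u_0(x,y)=\bar{f}y-\bar{\tau} x$ matches \eqref{PartialDerivIC_x_Cond_tan_u_fy} verbatim, while $v_0\equiv 0$ and the constant $h_0\equiv \tilde{h}_0(0)$ trivially fulfil \eqref{PartialDerivIC_x_Cond_tan_u}--\eqref{PartialDerivIC_mixedxy_Cond_tan_u}. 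Theorem \ref{thm:ADM_order_n_tan_u} then guarantees $u(x,y,t)=\bar{f}y-\bar{\tau} x$, $h(x,y,t)=\tilde{h}_0(t)$, and $v(x,y,t)=\tilde{v}_0(t)+\tilde{v}_x(t)x+\tilde{v}_y(t)y$ with initial data $\tilde{v}_0(0)=\tilde{v}_x(0)=\tilde{v}_y(0)=0$, so only \eqref{eq:ODE_tan_u} remains to be integrated.

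The first ODE in \eqref{eq:ODE_tan_u} is linear and homogeneous in $\tilde{v}_0$, so $\tilde{v}_0(0)=0$ forces $\tilde{v}_0\equiv 0$. The coupled pair for $(\tilde{v}_x,\tilde{v}_y)$ is the crux; motivated by the structural proportionality encoded in $u_0=-\bar{\tau}\,x+\bar{f}\,y$, I would try the ansatz $\tilde{v}_x(t)=\bar{\tau}\,\phi(t)$ and $\tilde{v}_y(t)=-\bar{f}\,\phi(t)$ with $\phi(0)=0$. Substituting into the second equation of \eqref{eq:ODE_tan_u} collapses it to $\phi'(t)=\bar{f}\bigl(1+\phi(t)^2\bigr)$, and a short check shows that the third equation reduces consistently to the same Riccati equation. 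Separation of variables with $\phi(0)=0$ yields $\phi(t)=\tan(\bar{f}t)$, hence $\tilde{v}_x(t)=\bar{\tau}\tan(\bar{f}t)$ and $\tilde{v}_y(t)=-\bar{f}\tan(\bar{f}t)$, giving the claimed expression for $v(x,y,t)$.

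Substituting $\tilde{v}_y(t)=-\bar{f}\tan(\bar{f}t)$ into the scalar linear equation for $\tilde{h}_0$ produces
\begin{equation*}
\frac{{\rm d}\tilde{h}_0}{{\rm d}t}=\bigl[\bar{\tau}+\bar{f}\tan(\bar{f}t)\bigr]\tilde{h}_0(t),
\end{equation*}
whose solution with the prescribed initial value, using $\int\bar{f}\tan(\bar{f}t)\,{\rm d}t=-\ln|\cos(\bar{f}t)|$, is $\tilde{h}_0(t)=\tilde{h}_0(0)\,{\rm e}^{\bar{\tau} t}\sec(\bar{f}t)$, exactly the third line of \eqref{eq:case51_exact_c}. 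For the anticyclonic-vortex characterisation I would compute the initial vertical vorticity $\upartial_x v_0-\upartial_y u_0=-\bar{f}$, which has sign opposite to the Coriolis parameter, and identify the first singularity of $\tan(\bar{f}t)$ and $\sec(\bar{f}t)$ at $t=\pi/(2\bar{f})$ as the finite escape time.

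The main obstacle is finding the correct ansatz for the coupled $(\tilde{v}_x,\tilde{v}_y)$ subsystem: without the guess $\tilde{v}_x:\tilde{v}_y=\bar{\tau}:(-\bar{f})$ suggested by the form of $u_0$, the system looks like a genuinely coupled pair of nonlinear ODEs; with it, both equations collapse to a single Riccati equation and the rest of the argument is routine integration.
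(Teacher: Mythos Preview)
Your proposal is correct and follows exactly the route indicated by the paper, whose proof is the single line ``This is obtained by solving \eqref{eq:ODE_tan_u} in Theorem \ref{thm:ADM_order_n_tan_u}.'' You have simply supplied the details the paper omits---verifying the hypotheses, observing $\tilde{v}_0\equiv 0$, using the proportional ansatz to collapse the $(\tilde{v}_x,\tilde{v}_y)$ pair to a single Riccati equation, and integrating the resulting linear equation for $\tilde{h}_0$---all of which is correct.
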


\begin{proof}
This is obtained by solving \eqref{eq:ODE_tan_u} in Theorem \ref{thm:ADM_order_n_tan_u}.
\end{proof}
    
\begin{theorem}
\label{thm:anticyclonic_finite_escape_v}
For any flows over flat bottom topography $(D=0)$ and constant Coriolis parameter $\bar{f}\neq 0$ and initial constant free surface height $\tilde{h}_0(0)$, the solution $u$, $v$, and $h$ with respect to their initial conditions are defined as follows. If the initial behaviour is defined by
     \begin{equation}
     u_0(x,y)=0,\;\;
     v_0(x,y)=-\bar{f}\,x-\bar{\tau} \,y,\;\;
     h_0(x,y)=\tilde{h}_{0}(0),\label{eq:case52_initial}
     \end{equation}

     \noindent then

     \begin{align}
     u(x,y,t)=&-\bar{f}\,x\,\mathrm{tan}\left(\bar{f}\,t\right)-\bar{\tau} \,y\,\mathrm{tan}\left(\bar{f}\,t\right),\nonumber\\
     v(x,y,t)=&-\bar{f}\,x-\bar{\tau} \,y,\nonumber\\
      h(x,y,t)=& \tilde{h}_{0}(0)\mathrm{e}^{\bar{\tau}\, t }\sec \left(\bar{f}\,t\right).
     \label{eq:case52_exact_c}
     \end{align}

    \noindent Furthermore, this solution describes anticyclonic vortices with a finite escape time that is based on initial meridional velocity being represented as $v_0(x,y)=-\bar{f}x-\bar{\tau} y$.

\end{theorem}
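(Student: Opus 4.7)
The plan is to apply Theorem~\ref{thm:ADM_order_n_tan_v} to reduce the full SWE system~\eqref{eq:SWE_non_dimen} to the finite-dimensional ODE system~\eqref{eq:ODE_tan_v}, and then integrate that system explicitly under the prescribed initial data. First I would check that the initial conditions~\eqref{eq:case52_initial} fall within the hypotheses of the theorem: the form $v_0(x,y)=-\bar{f}x-\bar{\tau}y$ is exactly the required one, and $u_0\equiv 0$, $h_0$ constant make conditions~\eqref{PartialDerivIC_x_Cond_tan_v}--\eqref{PartialDerivIC_mixedxy_Cond_tan_v} trivial. The theorem then supplies $u(x,y,t)=\tilde{u}_0(t)+\tilde{u}_x(t)x+\tilde{u}_y(t)y$, $v(x,y,t)=-\bar{f}x-\bar{\tau}y$, $h(x,y,t)=\tilde{h}_0(t)$, with coefficients obeying~\eqref{eq:ODE_tan_v} under initial values $\tilde{u}_0(0)=\tilde{u}_x(0)=\tilde{u}_y(0)=0$ and $\tilde{h}_0(0)$ as given.

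Next I would solve the ODE block in order of increasing difficulty. The equation for $\tilde{u}_0$ is linear and homogeneous in $\tilde{u}_0$ with zero initial value, so $\tilde{u}_0(t)\equiv 0$. The substantive step is the coupled nonlinear block for $(\tilde{u}_x,\tilde{u}_y)$. Motivated by the symmetric structure with Theorem~\ref{thm:anticyclonic_finite_escape_u}, I would try the ansatz $\tilde{u}_y(t)=(\bar{\tau}/\bar{f})\,\tilde{u}_x(t)$; substituting into both ODEs collapses them to the single separable Riccati equation $\dot{\tilde{u}}_x=-\tilde{u}_x^2-\bar{f}^2$, which integrates via $\arctan(\tilde{u}_x/\bar{f})=-\bar{f}t+C$. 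The initial condition $\tilde{u}_x(0)=0$ forces $C=0$, giving $\tilde{u}_x(t)=-\bar{f}\tan(\bar{f}t)$ and hence $\tilde{u}_y(t)=-\bar{\tau}\tan(\bar{f}t)$, which reproduces the claimed formula for $u(x,y,t)$.

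Finally, with $\tilde{u}_x$ in hand, the ODE for $\tilde{h}_0$ is the linear equation $\dot{\tilde{h}}_0=\bigl(\bar{\tau}+\bar{f}\tan(\bar{f}t)\bigr)\tilde{h}_0$, obtained from the continuity equation using $\partial_x u+\partial_y v=\tilde{u}_x-\bar{\tau}$. Integrating yields $\tilde{h}_0(t)=\tilde{h}_0(0)\,\mathrm{e}^{\bar{\tau}t}\sec(\bar{f}t)$, and the blow-up of $\tan$ and $\sec$ at $t=\pi/(2\bar{f})$ is the finite escape time justifying the anticyclonic-vortex interpretation in the final sentence of the theorem. The main obstacle I expect is the coupled Riccati block for $(\tilde{u}_x,\tilde{u}_y)$: without the proportionality ansatz one would face a genuinely two-dimensional nonlinear system whose structure is not immediately transparent, but the parallelism with Theorem~\ref{thm:anticyclonic_finite_escape_u} makes the ansatz a natural first guess and the resulting scalar Riccati equation is elementary.
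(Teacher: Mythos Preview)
Your proposal is correct and follows exactly the paper's approach, which consists of the single line ``This is obtained by solving~\eqref{eq:ODE_tan_v} in Theorem~\ref{thm:ADM_order_n_tan_v}''; you have simply supplied the explicit integration details that the paper omits. It is worth noting that your rederivation of the $\tilde{h}_0$ equation directly from the continuity equation (yielding $\dot{\tilde{h}}_0=(\bar{\tau}-\tilde{u}_x)\tilde{h}_0$) is in fact necessary, since the last line of~\eqref{eq:ODE_tan_v} as printed omits the $+\bar{\tau}\tilde{h}_0$ term---compare the analogous equation in~\eqref{eq:ODE_tan_u}---and without it one could not recover the stated solution $\tilde{h}_0(0)\,\mathrm{e}^{\bar{\tau}t}\sec(\bar{f}t)$.
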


\begin{proof}
This is obtained by solving \eqref{eq:ODE_tan_v} in Theorem \ref{thm:ADM_order_n_tan_v}.
\end{proof}

\begin{theorem}
\label{thm:theorem_exp_grow}
Let $\{u_n(x,y,t)\}$, $\{v_n(x,y,t)\}$, $\{h_n(x,y,t)\}$ be the sequence of decomposed functions of $u$, $v$, and $h$, where their relationship is defined by \eqref{eq:ADM_iter_Linv} (for $n \in \mathbb{N}$) given a flat bottom topography $D=0$. If the initial conditions $u_0(x,y)$ and $v_0(x,y)$ are defined by \eqref{eq:IC_cond_VII} where the initial free surface behaviour $h_0(x,y)$ has the following property 

\begin{equation}
    \frac{\upartial h_0(x,y)}{\upartial x}=\frac{\upartial h_0(x,y)}{\upartial y}=0
    \label{IC_h_exp_grow},
\end{equation}

\noindent then the solutions of $u$, $v$, and $h$ have the property where 

\begin{equation}
    u(x,y,t)=\bar{f}y-\bar{\tau} x,
\label{PartialCondition_u_exp_grow}
\end{equation}

\begin{equation}
v(x,y,t)=-\bar{f}x-\bar{\tau} y,
\label{PartialCondition_v_exp_grow}
\end{equation}

\noindent and 

\begin{equation}
    \frac{\upartial h(x,y,t)}{\upartial x}=\frac{\upartial h(x,y,t)}{\upartial y}=0.
\label{PartialCondition_h_exp_grow}
\end{equation}

\noindent Consequently, these solutions can be expressed as 
\begin{equation}
    {u}(x,y,t)=\bar{f}y-\bar{\tau} x,
\label{uLinearCondition_linear_u_v}
\end{equation}
    
\begin{equation}
{v}(x,y,t)=-\bar{f}x-\bar{\tau} y,
\label{vLinearCondition_linear_u_v}
\end{equation}

\noindent and 

\begin{equation}
  {h}(x,y,t)=\tilde{h}_0(t),
\label{hLinearCondition_linear_u_v}
\end{equation}

\noindent where the coefficients  $\tilde{h}_0(t)$ are time-dependent and satisfy
\begin{equation}
    \tilde{h}_0(t)= {\rm e}^{2\bar{\tau}\, t}\tilde{h}_0(0).
     \label{ODE_linear_h_exp_grow_solution}
\end{equation}

\noindent Furthermore, this solution describes anticyclonic vortices with exponentially growing free surface height that is based on the initial zonal velocity $u_0(x,y)=\bar{f}y-\bar{\tau} x$ and the initial meridional velocity $v_0(x,y)=-\bar{f}x-\bar{\tau} y$.

\end{theorem}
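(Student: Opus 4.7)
\medskip

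\noindent\textbf{Proof proposal.} The plan is to recognise that the initial data in Condition VII together with \eqref{IC_h_exp_grow} simultaneously satisfy the hypotheses of both Lemma \ref{ADM:PartialDerivativeCoeffLemma_tan_u} and Lemma \ref{ADM:PartialDerivativeCoeffLemma_tan_v}, and then push this observation through the ADM decomposition. Indeed, $u_0(x,y)=\bar{f}y-\bar{\tau}x$ matches \eqref{PartialDerivIC_x_Cond_tan_u_fy}; the candidate $v_0(x,y)=-\bar{f}x-\bar{\tau}y$ has $\partial_x^2 v_0=\partial_y^2 v_0=\partial_{xy}^2 v_0=0$, verifying \eqref{PartialDerivIC_x_Cond_tan_u}--\eqref{PartialDerivIC_mixedxy_Cond_tan_u}; and the analogous identities with $u$ and $v$ swapped verify the hypotheses of Lemma \ref{ADM:PartialDerivativeCoeffLemma_tan_v}. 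The condition \eqref{IC_h_exp_grow} provides the $\partial_x h_0=\partial_y h_0=0$ requirement in both lemmas.

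Applying Lemma \ref{ADM:PartialDerivativeCoeffLemma_tan_u} gives $u_n\equiv 0$ and $\partial_x h_n=\partial_y h_n=0$ for all $n\in\mathbb{N}^{+}$; applying Lemma \ref{ADM:PartialDerivativeCoeffLemma_tan_v} gives $v_n\equiv 0$ for all $n\in\mathbb{N}^{+}$. Summing the ADM series \eqref{eq:ADM_series} then yields $u(x,y,t)=u_0(x,y)=\bar{f}y-\bar{\tau}x$, $v(x,y,t)=v_0(x,y)=-\bar{f}x-\bar{\tau}y$, and $h(x,y,t)=\tilde{h}_0(t)$ depending only on time, which are precisely \eqref{uLinearCondition_linear_u_v}--\eqref{hLinearCondition_linear_u_v}.

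It remains to derive the ODE for $\tilde{h}_0(t)$. I would substitute the reduced forms into the continuity equation \eqref{eq:SWE_non_dimen}c with $D=0$. Since $\partial_x h=\partial_y h=0$, only the divergence terms $h\,\partial_x u$ and $h\,\partial_y v$ survive; with $\partial_x u=-\bar{\tau}$ and $\partial_y v=-\bar{\tau}$ this collapses to
\begin{equation}
\frac{\mathrm d}{\mathrm d t}\tilde{h}_0(t) = 2\bar{\tau}\,\tilde{h}_0(t), \nonumber
\end{equation}
whose solution is \eqref{ODE_linear_h_exp_grow_solution}. As a consistency check I would also substitute the linear $u$ and $v$ into the momentum equations \eqref{eq:SWE_non_dimen}a,b; the advection, Coriolis, and friction contributions cancel algebraically, confirming that the ansatz is indeed an exact solution rather than merely a truncation.

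The main obstacle is not computational but structural: verifying that the hypotheses of the two lemmas are compatible and that invoking them \emph{together} is consistent, since each lemma was originally stated and proved against a background in which the other velocity component is merely constrained by vanishing second derivatives. Once the combined conclusion $u_n=v_n=0$ (for $n\geq 1$) is secured, the remaining derivation is a short calculation; the physical interpretation as an anticyclonic vortex with exponentially growing free surface follows immediately from the sign $+2\bar{\tau}$ in the exponent.
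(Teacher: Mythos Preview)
Your proposal is correct and follows essentially the same route as the paper: verify that Condition~VII with \eqref{IC_h_exp_grow} simultaneously satisfies the hypotheses of Lemmas~\ref{ADM:PartialDerivativeCoeffLemma_tan_u} and~\ref{ADM:PartialDerivativeCoeffLemma_tan_v}, conclude $u_n=v_n=0$ for $n\ge 1$ and $\partial_x h_n=\partial_y h_n=0$, sum the series, and then substitute into the continuity equation to obtain and solve $\tfrac{\mathrm d}{\mathrm d t}\tilde h_0=2\bar\tau\,\tilde h_0$. The only cosmetic difference is that the paper routes the conclusion through Theorems~\ref{thm:ADM_order_n_tan_u} and~\ref{thm:ADM_order_n_tan_v} rather than invoking the lemmas directly; your structural worry about compatibility of the two lemmas is unfounded, since both concern the same recursion \eqref{eq:ADM_iter_Linv} and their conclusions simply intersect.
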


\begin{proof}   
The initial conditions $u_0(x,y)$, $v_0(x,y)$, and $h_0(x,y)$ satisfy  \eqref{PartialDerivIC_x_Cond_tan_u_fy}  - \eqref{PartialDerivIC_mixedxy_Cond_tan_u} as well as \eqref{PartialDerivIC_x_tan_v_fx} - \eqref{PartialDerivIC_mixedxy_Cond_tan_v} and thus Lemmas \ref{ADM:PartialDerivativeCoeffLemma_tan_u} and \ref{ADM:PartialDerivativeCoeffLemma_tan_v} satisfied. From Theorems \ref{thm:ADM_order_n_tan_u} and \ref{thm:ADM_order_n_tan_v} we observe that

\begin{equation}
     u(x,y,t)=\bar{f}y-\bar{\tau} x, \nonumber
\end{equation}

\begin{equation}
    v(x,y,t)=-\bar{f}x-\bar{\tau} y, \nonumber
\end{equation}

\noindent and

\begin{equation}
    \frac{\upartial h(x,y,t)}{\upartial x}=\frac{\upartial h(x,y,t)}{\upartial y}=0. \nonumber
\end{equation}

\noindent Consequently, we can write

\begin{equation}
    h(x,y,t)=\tilde{h}_0(t), \nonumber
\end{equation}

\noindent where the coefficient $\tilde{h}_0(t)$ is time-dependent and satisfies

\begin{equation}
   \frac{\rm d}{{\rm d}t}\tilde{h}_0(t)=2\bar{\tau} \tilde{h}_0(t).
   \label{ODE_linear_h_exp_grow_equation}
\end{equation}

\noindent Solving \eqref{ODE_linear_h_exp_grow_equation} gives \eqref{ODE_linear_h_exp_grow_solution}. 
\end{proof}

\par Theorems \ref{thm:anticyclonic_finite_escape_u} and \ref{thm:anticyclonic_finite_escape_v} illustrate the effects of anticyclonic vortices with finite escape times since these solutions are valid for $t \in \left[0,  \pi/\left(2 \bar{f}\right) \right)$, which also include the effects of constant bottom friction in the velocity descriptions. These results also show the effect of constant bottom friction on the free surface height, where this phenomenon grows exponentially at a rate that is directly proportional to the constant friction coefficient $\bar{\tau}$. Theorem \ref{thm:theorem_exp_grow} provides further generalisation via considering the initial zonal and meridional velocity effects that are a linear combinations of constant Coriolis force and constant bottom friction. These still describe anticyclonic vortices due to the fact that the initial vorticity is negatively proportional to the Coriolis force. Hence, the velocity phenomena $u$ and $v$ are consistent with the initial behaviour; however, the free surface height grows exponentially faster in this case where the growth rate is directly proportional to twice the constant bottom friction coefficient. 

\par These results, corresponding to Conditions I - VII, are indicative of dissipation-induced instability in shallow water waves with respect to constant
bottom friction. For instance, examining Theorem \ref{thm:ADM_order_n_sin} we note that for the case of negligible bottom friction ($\bar{\tau} = 0$) this behaviour is purely oscillatory where the inertial oscillation frequency depends on the constant Coriolis force $\bar{f}$ \citep[Theorem 3.3 and Corollary 3.4]{liu2021ADMSWE}. Additionally, for the phenomena of anticyclonic vortices with finite escape times, we note that the initial dynamics have a profound impact on this type of stability behaviour which extends the analysis and results of \citep[Theorem 3.3 through Theorem 3.10]{liu2021ADMSWE}. Overall, these results illustrate the direct interplay between $\bar{\tau}$ and $\bar{f}$ on this type of phenomena where they are present in both short and long-term behaviour that advance previous results that only highlight the impact of these effects on this type of stability behaviour \citep{bloch1994dissipation,krechetnikov2007dissipation,krechetnikov2009dissipation}. These results considering the nonlinear effect also improve the analysis of \citet{magdalena2022effect}.

\section{Numerical validation and results}
\label{sec:numerical}

Numerical validation is done by comparing the respective convergence and accuracy of the partial sums ($S_N(u)$, $S_N(v)$, and $S_N(h)$) against the governing equations \eqref{eq:SWE_non_dimen}, the exact solutions ($u$, $v$, and $h$), and the numerical solutions ($\hat{u}$, $\hat{v}$, and $\hat{h}$) for Conditions 1 - VII via the relative integral squared error defined as

\begin{equation}
	E(N) = \frac{\int_{-L_x}^{L_x}\int_{-L_y}^{L_y}\int_0^{T} e(N;x,y,t) \,dt\,dx\,dy}{\int_{-L_x}^{L_x}\int_{-L_y}^{L_y}\int_0^{T}(u^2+v^2+h^2) \,dt\,dx\,dy},
    \label{eq:E_int}
\end{equation}

\begin{table}
    \centering
    \caption{Summary of the relevant parameters used to validate Conditions I-VII. We set $F=1$, $\bar{f}=0.5$, $\bar{\tau}=1$, $D_0=0$ for all Conditions.  }
    \begin{tabular}{cccccc}
        Cond. No. &  $u_0(x,y)$ & $v_0(x,y)$ & $h_0(x,y)$ & Value  &  Solutions \\
        \hline
        I &  $-\tilde{h}_y/(F^2\bar{f})$ & $\tilde{h}_x/(F^2 \bar{f})$ & $\tilde{h}_x x+\tilde{h}_y y$ & $\tilde{h}_x=\tilde{h}_y=10^{-4}$ & Theorem \ref{thm:ADM_order_n_sin}\\
        II &  0& 0 & $\tilde{h}_x x+ \tilde{h}_y y$ & $\tilde{h}_x=\tilde{h}_y=10^{-4}$ & Theorem \ref{thm:ADM_order_n_sin}\\
        III &  0& 0 & $\tilde{h}_x x$ & $\tilde{h}_x=10^{-4}$ & Theorem \ref{thm:ADM_order_n_sin}\\
        IV &  0& 0 &  $\tilde{h}_y y$ & $\tilde{h}_y=10^{-4}$ & Theorem \ref{thm:ADM_order_n_sin}\\
        V &   $\bar{f}y-\bar{\tau} x$ & 0 & $\tilde{h}_0(0)$ & $\tilde{h}_0(0)=10^{-4}$ & Theorem \ref{thm:anticyclonic_finite_escape_u}\\
        VI &  0 & $-\bar{f}x-\bar{\tau} y$ &  $\tilde{h}_0(0)$ & $\tilde{h}_0(0)=10^{-4}$ & Theorem \ref{thm:anticyclonic_finite_escape_v}\\
        VII &  $\bar{f}y-\bar{\tau} x$ & $-\bar{f}x-\bar{\tau} y$ &  $\tilde{h}_0(0)$ &  $\tilde{h}_0(0)=10^{-4}$ & Theorem \ref{thm:theorem_exp_grow}\\
        \hline
    \end{tabular}
    \label{tab:parameters_summary_friction}
\end{table}

\noindent where $L_x=1$, $L_y=1$, and $T=1$. The convergence $E_c(N)$ is measured by evaluating \eqref{eq:E_int} with

\begin{align}
    &e(N;x,y,t) = \nonumber\\
    &\Bigg(\frac{\upartial S_N(u)}{\upartial t}+S_N(u)\frac{\upartial S_N(u)}{\upartial x}+S_N(v)\frac{\upartial S_N(u)}{\upartial y}+\frac{1}{F^2}\frac{\upartial S_N(h)}{\upartial x}-\bar{f}S_N(v)+\bar{\tau} S_N(u)\Bigg)^2\nonumber\\
    &+\Bigg(\frac{\upartial S_N(v)}{\upartial t}+S_N(u)\frac{\upartial S_N(v)}{\upartial x}+S_N(v)\frac{\upartial S_N(v)}{\upartial y}+\frac{1}{F^2}\frac{\upartial S_N(h)}{\upartial y}+\bar{f}S_N(u)+\bar{\tau} S_N(v)\Bigg)^2\nonumber\\
    &+\Bigg(\frac{\upartial S_N(h)}{\upartial t}+\frac{\upartial }{\upartial x}\Big\{S_N(u)[S_N(h)+D]\Big\}+\frac{\upartial }{\upartial y}\Big\{S_N(v)[S_N(h)+D]\Big\}\Bigg)^2.
    \label{eq:square_residue}
\end{align}
$E_{ex}(N)$ is the accuracy of the partial sums of $u$, $v$, and $h$ against the exact solutions which is measured via evaluating \eqref{eq:E_int} with
\begin{equation}
    e(N;x,y,t)=(S_N(u)-u)^2+(S_N(v)-v)^2+(S_N(h)-h)^2.
\end{equation}
$\hat{E}(N)$ is the accuracy of the numerical solutions against the partial sums of $u$, $v$, and $h$ which is measured via evaluating \eqref{eq:E_int} with
\begin{equation}
    e(N;x,y,t)=(S_N(u)-\hat{u})^2+(S_N(v)-\hat{v})^2+(S_N(h)-\hat{h})^2.
\end{equation}
$\hat{E}_{ex}$ is the accuracy between the numerical and exact solutions, which is measured via evaluating \eqref{eq:E_int} with
\begin{equation}
    e(N;x,y,t)=(u-\hat{u})^2+(v-\hat{v})^2+(h-\hat{h})^2.
\end{equation}

In all evaluations, we follow \citet{matskevich2019exact} where the numerical implementations were done using the large-particle method and $F=1$ represents the characteristic velocity as $U_0=\sqrt{gH}$. The integration operator in equation \eqref{eq:E_int} is discretised with spatial and temporal grid spacings of $\Delta x=0.1$, $\Delta y=0.1$, and $\Delta t=0.1$. Furthermore, Table \ref{tab:parameters_summary_friction} lists the relevant parameters used to validate Conditions 1-VII including constant parameters, initial conditions, and the derived exact solutions as described in Section \ref{sec:new_exact}.

\subsection{Results}

\begin{table}
    \centering
\caption[center]{Convergence trend ($N=2,4$, and 6) and accuracy ($N=6$) summary.}
    \begin{tabular}{ccccccc}
         & $E_c(N=2)$ & $E_c(N=4)$ & $E_c(N=6)$ & $E_{ex}(N=6)$ & $\hat{E}(N=6)$ & $\hat{E}_{ex}$ \\ 
         \hline
        Minimum & $3.3\times 10^{-3}$ & $6.5\times 10^{-5}$ & $1.1\times 10^{-6}$ & $1.4\times 10^{-12}$ & $4.9\times 10^{-6}$ & $4.5\times 10^{-6}$ \\
        Maximum & $4.0\times 10^{-1}$ & $3.0\times 10^{-2}$ & $4.3\times 10^{-4}$ & $2.6\times 10^{-8}$  & $9.0\times 10^{-3}$ & $9.0\times 10^{-3}$ \\
        \hline
    \end{tabular}
    \label{tab:residue_summary_friction}
\end{table}

Table \ref{tab:residue_summary_friction} summarises the convergence and accuracy results, where we observe the errors stabilise within ${\rm O}(10^{-6})$ and ${\rm O}(10^{-4})$ when  $N=6$. This suggests that the partial sums of no more than six terms, from Adomian decomposition methods, provide effective approximations. The accuracy of the partial sums further validates these assertions where we observe accuracies between ${\rm O}(10^{-6})$ and ${\rm O}(10^{-3})$. The exact solutions for Conditions I-VII are also promising, where we note that the deviations from the numerical approximations are also small. 

\begin{figure}[h!]
    \centering
    (a) $t=0$ \hspace{0.35\textwidth} (b) $t=10,000$
    \includegraphics[width=0.49\textwidth]{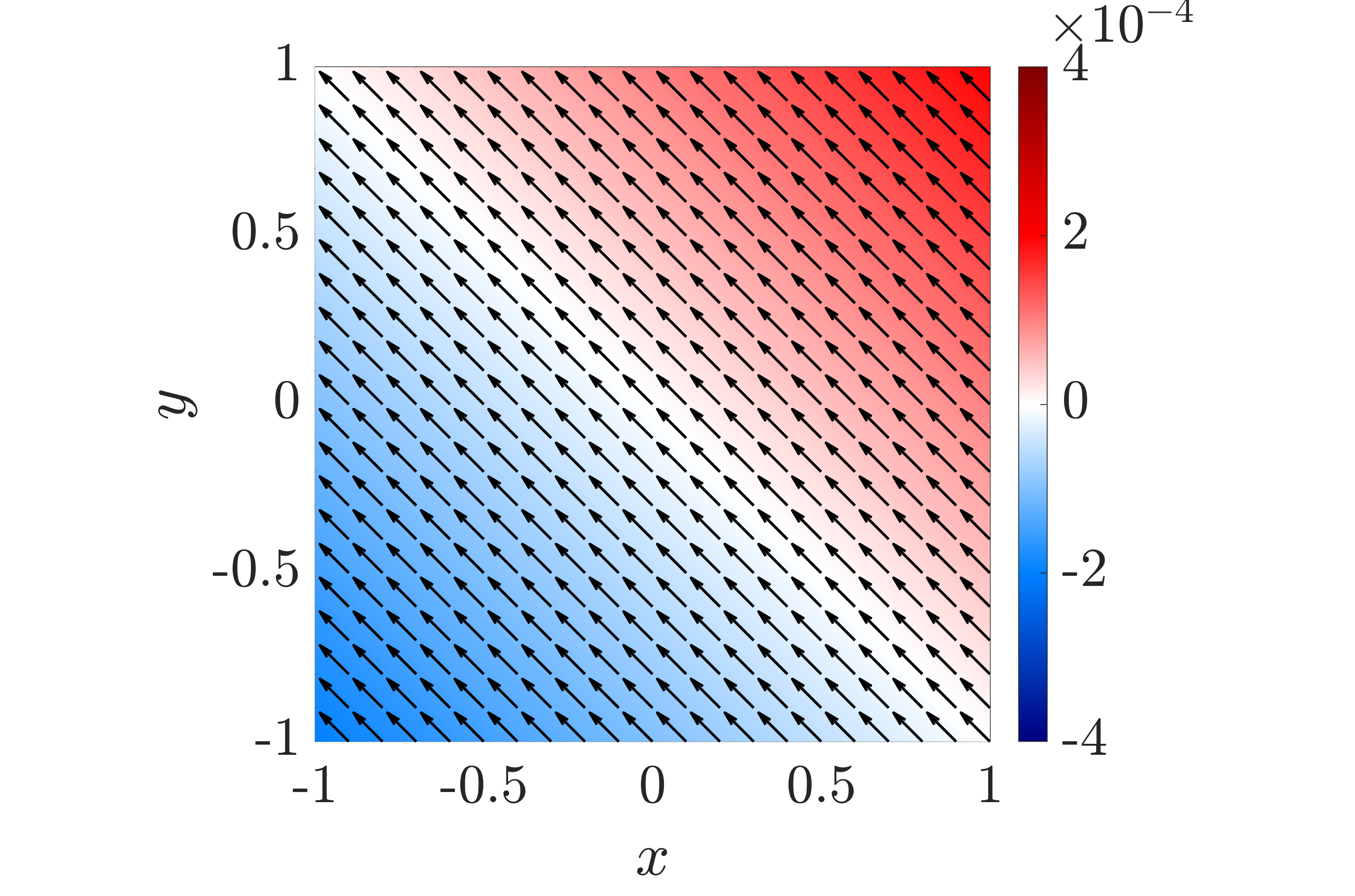}
    \includegraphics[width=0.49\textwidth]{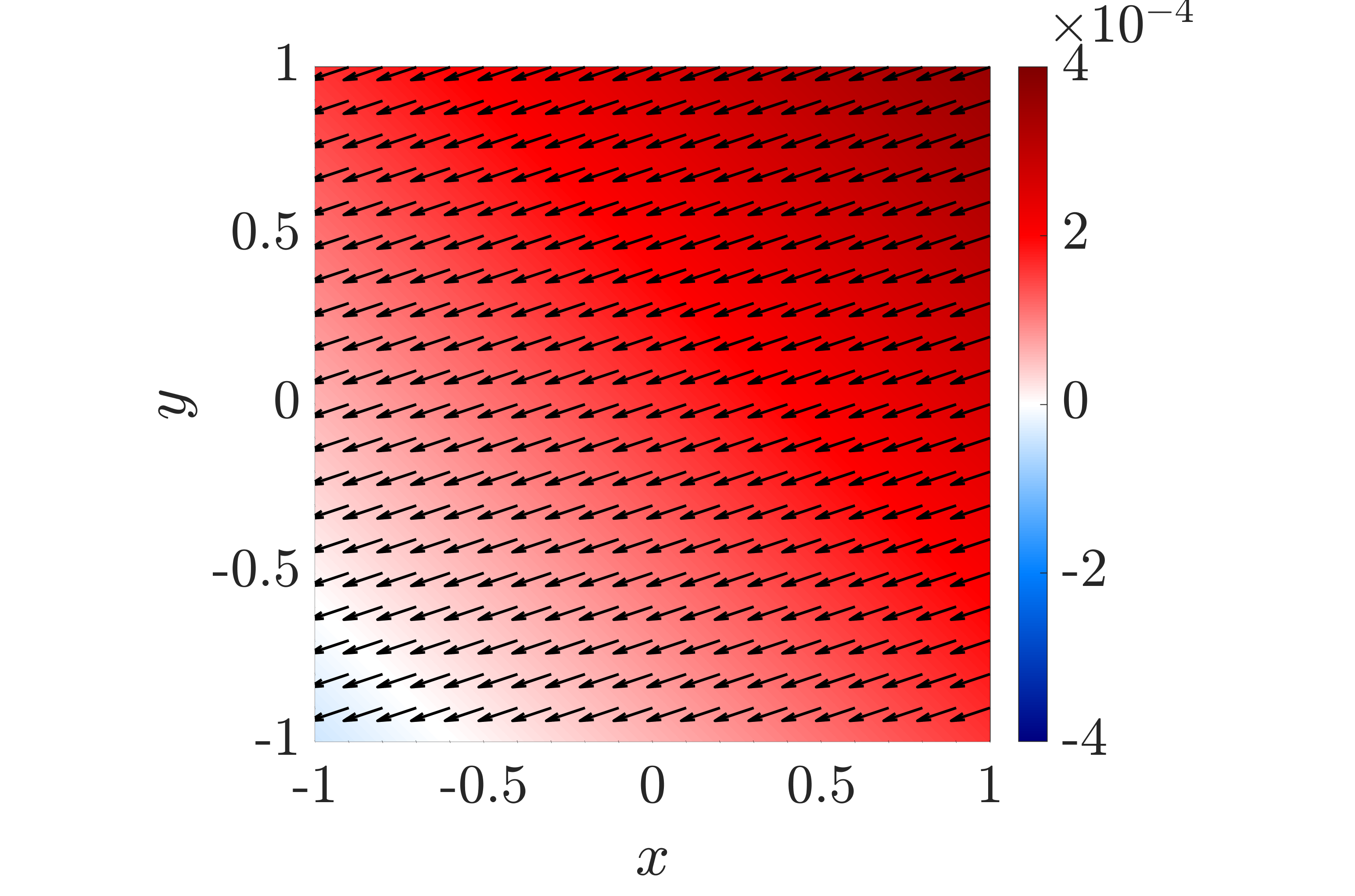}
    \caption{Velocity vector field $\boldsymbol{u}$ (arrow) and free surface height (contour) behaviour for Condition I: (a) initial condition at $t=0$ and (b) exact solution at $t=10,000$ based on Theorem \ref{thm:ADM_order_n_sin}. Parameters used include $F=1$, $\bar{f}=0.5$, $\bar{\tau}=1$, $D_0=0$, and $\tilde{h}_x=\tilde{h}_y=10^{-4}$ (Colour online).}
    \label{fig:inertial_geostrophic_t_10000}
\end{figure}

\begin{figure}[h!]
    \centering
    (a) $t=0$ \hspace{0.4\textwidth} (b) $t=1$ 
    \includegraphics[width=0.49\textwidth]{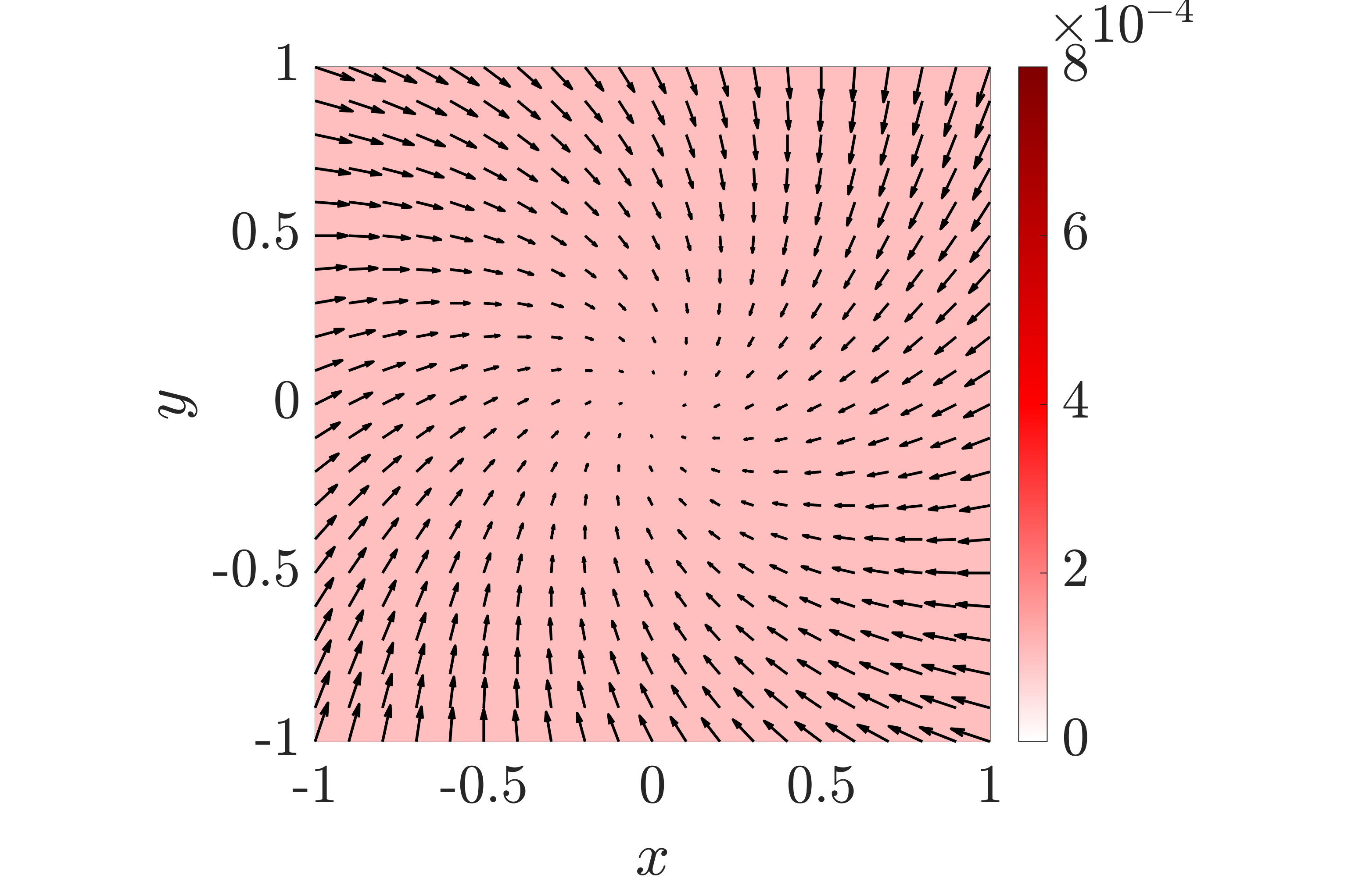}
    \includegraphics[width=0.49\textwidth]{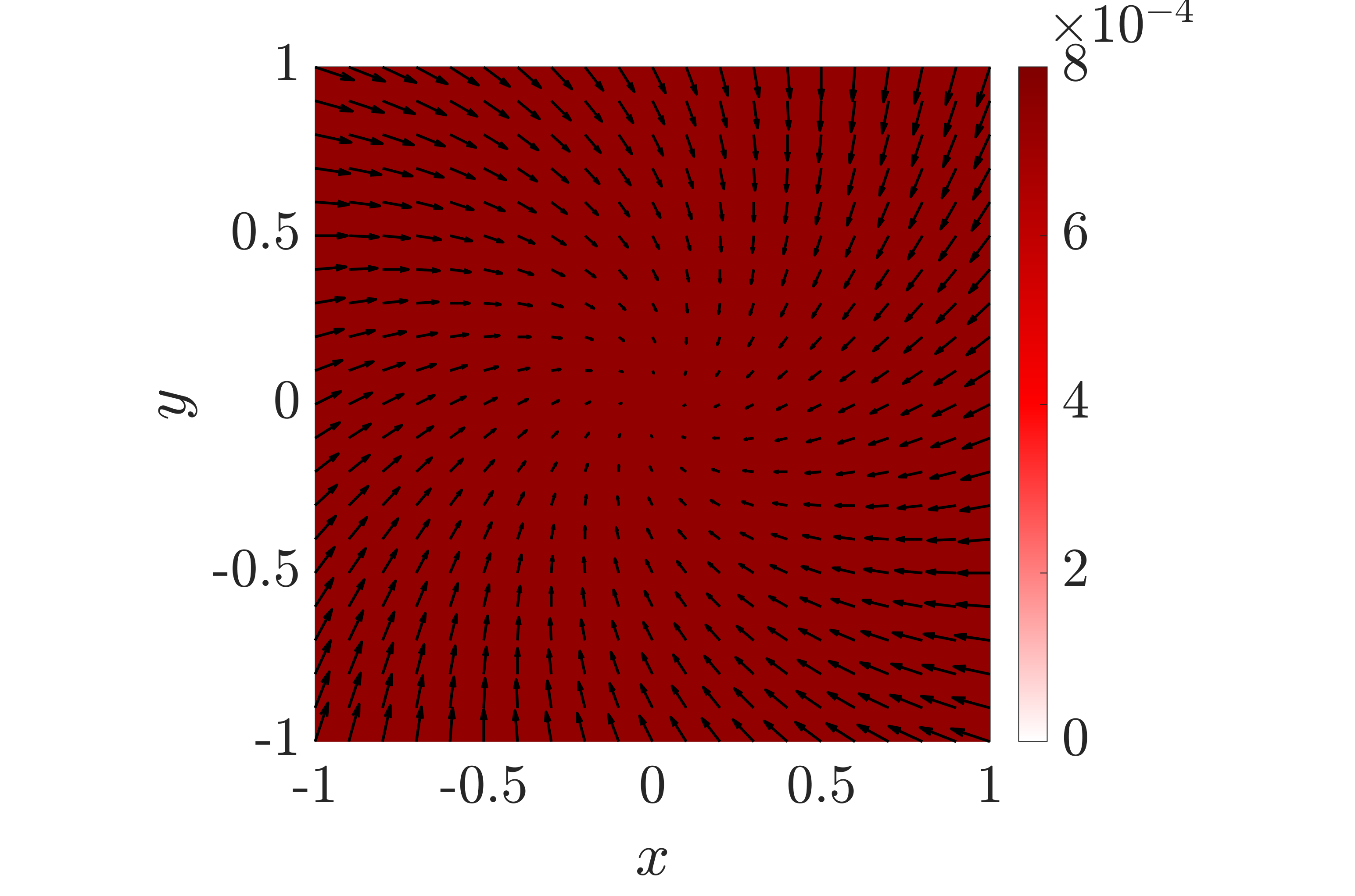}
    \caption{Velocity vector field $\boldsymbol{u}$ (arrow) and free surface height (contour) behaviour for Condition VII: (a) initial condition at $t=0$ and (b) exact solution at $t=1$ based on Theorem \ref{thm:theorem_exp_grow}. Parameters used include $F=1$, $\bar{f}=0.5$, $\bar{\tau}=1$, $D_0=0$, and $\tilde{h}_0=10^{-4}$ (Colour online).}
    \label{fig:exponential_growing}
\end{figure}

Figure \ref{fig:inertial_geostrophic_t_10000} illustrates an example of the behaviour of the exact solution corresponding to Condition I in which consequently follows Theorem \ref{thm:ADM_order_n_sin}. Figure \ref{fig:inertial_geostrophic_t_10000}(a) illustrates the initial behaviour of the geostrophic flow, where it is observed that the initial velocity vector $\boldsymbol{u}=\langle u(x,y,0),v(x,y,0)\rangle$ is orthogonal to the initial gradient of the free surface height which follows \eqref{eq:IC_cond_I}. Furthermore, Figure \ref{fig:inertial_geostrophic_t_10000}(b) illustrates that the long-term behaviour follows Corollary \ref{cor:linear_growing} where, for $t=10,000$, the velocity vector is consistent with \eqref{eq:inertial_geostrophic_inf_time_u} - \eqref{eq:inertial_geostrophic_inf_time_v}. Additionally, we notice that the free surface height has increased whereas the free surface gradient is unaffected which models equation \eqref{eq:inertial_geostrophic_inf_time_h}. Figure  \ref{fig:exponential_growing} illustrates the behaviour that corresponds to Condition VII, where we notice the increasing dynamics of the free surface height whereas the velocity vector field $\boldsymbol{u}$ is time-invariant which is consistent with  Theorem \ref{thm:theorem_exp_grow}.

\section{Discussion}
\label{sec:conclusion}

\par This work exploits the flexibility of the Adomian decomposition method (ADM) to connect the assumed forms that originated in the works of \cite{thacker1981some} and are still prevalent in the recent works of \cite{matskevich2019exact}. Therefore, we provided several additional generalisations of shallow-water wave phenomena depicting inertial geostrophic oscillations and anticyclonic vortices, while considering constant bottom friction, over flat bottom topographies. Consequently, our results demonstrate the importance of understanding the initial behaviour for these types of flows where we note the direct interplay between the Coriolis force $\bar{f}$, the constant bottom friction $\bar{\tau}$, and dissipation-induced instability from short and long-term standpoints. Hence, these results also  significantly advance the works of \cite{matskevich2019exact} and \cite{liu2021ADMSWE} where they can be applied to explore the effects of equatorial superrotation in atmospheric dynamics \citep{scott2007forced,scott2008equatorial}. Some avenues of future work include further generalising this approach to incorporate Beta plane approximations of the Coriolis force as well as exploring the effects of baroclinic shallow water waves while also exploring more generalized bottom topographies.  A comprehensive comparison between Adomian decomposition method against existing shallow water solvers \citep{couderc2013dassfow,delestre2017fullswof,yu2021hybrid} is another direction of future work.

\section*{Disclosure statement}

No potential conflict of interest was reported by the author(s).

\section*{Article Word Count}
3,404 words

\bibliography{main}
\bibliographystyle{gGAF}

\appendix
\section{Proof of Lemma \ref{ADM:PartialDerivativeCoeffLemma}}

\label{proof_ADM:PartialDerivativeCoeffLemma}

\begin{proof}
This is proven via mathematical induction by examining the recursion relationships for $u$, $v$, and $h$ in equation \eqref{eq:ADM_iter_Linv}. Condition \eqref{PartialDeriv_x_Cond} is demonstrated by examining the following relationships

\begin{equation}
\frac{\upartial^{2} u_{n+1}}{\upartial x^{2}}=-{L}_t^{-1}\left\{\frac{\upartial^{2}}{\upartial x^{2}}\left[A_n\left(u,\frac{\upartial u}{\upartial x}\right) + A_n\left(v,\frac{\upartial u}{\upartial y}\right) \right] + \frac{1}{F^2}\frac{\upartial^{3} h_{n}}{\upartial x^{3}} - \bar{f}\, \frac{\upartial^{2}v_n}{\upartial x^{2}}+\bar{\tau} \frac{\upartial^2 u_n}{\upartial x^2}\right\},
\label{un_xx}\nonumber
\end{equation}

\begin{equation}
\frac{\upartial^{2} v_{n+1}}{\upartial x^{2}} =-{L}_t^{-1}\left\{\frac{\upartial^{2}}{\upartial x^{2}} \left[A_n\left(u,\frac{\upartial v}{\upartial x}\right) + A_n\left(v,\frac{\upartial v}{\upartial y}\right) \right] + \frac{1}{F^2}\frac{\upartial^{3} h_{n}}{\upartial x^{2}\upartial y} + \bar{f} \, \frac{\upartial^{2} u_n}{\upartial x^{2}}+\bar{\tau}\frac{\upartial^2 v_n}{\upartial x^2} \right\},
\label{vn_xx}\nonumber
\end{equation}

\noindent and 

\begin{equation}
\frac{\upartial^{3} h_{n+1}}{\upartial x^{3}} = - {L}_t^{-1}\left\{\frac{\upartial^{4} }{\upartial x^{4}}[A_n(u,h)] + \frac{\upartial^{4} }{\upartial x^{3}\upartial y}[A_n(v,h)]+\frac{\upartial^{4} }{\upartial x^{4}}[u_n D] + \frac{\upartial^{4} }{\upartial x^{3}\upartial y}[v_n D]\right\}.
\label{hn_xxx}\nonumber
\end{equation}

\noindent When $n= 0$, the relationship between the initial and first components for $u$, $v$, and $h$ become

\begin{equation}
\frac{\upartial^{2} u_{1}}{\upartial x^{2}}=-{L}_t^{-1}\left\{\frac{\upartial^{2}}{\upartial x^{2}}\left[u_{0}\frac{\upartial u_{0}}{\upartial x} \,  + v_{0}\frac{\upartial u_{0}}{\upartial y} \right] + \frac{1}{F^2}\frac{\upartial^{3} h_{0}}{\upartial x^{3}} - \bar{f}\, \frac{\upartial^{2}v_{0}}{\upartial x^{2}}+ \bar{\tau} \frac{\upartial^2 u_0}{\upartial x^2}\right\},
\label{u1_xx}
\end{equation}

\begin{equation}
\frac{\upartial^{2} v_{1}}{\upartial x^{2}} =-{L}_t^{-1}\left\{\frac{\upartial^{2}}{\upartial x^{2}} \left[u_{0} \frac{\upartial v_{0}}{\upartial x}+ v_{0} \frac{\upartial v_{0}}{\upartial y} \right] + \frac{1}{F^2}\frac{\upartial^{3} h_{0}}{\upartial x^{2} \upartial y} + \bar{f} \, \frac{\upartial^{2} u_{0}}{\upartial x^{2}}+ \bar{\tau} \frac{\upartial^2 v_0}{\upartial x^2} \right\},
\label{v1_xx}
\end{equation}

\noindent and 

\begin{equation}
\frac{\upartial^{3} h_{1}}{\upartial x^{3}} = - {L}_t^{-1}\left\{\frac{\upartial^{4} }{\upartial x^{4}}\left[u_{0} h_{0} \right] + \frac{\upartial^{4} }{\upartial x^{3}y} \left[v_{0} h_{0} \right]+\frac{\upartial^{4} }{\upartial x^{4}} \left[u_{0} D \right] + \frac{\upartial^{4} }{\upartial x^{3}\upartial y} \left[v_{0} D \right]\right\}.
\label{h1_xxx}
\end{equation}

\noindent Employing \eqref{PartialDerivIC_x_Cond} - \eqref{PartialDerivIC_mixedxy_Cond} it can be shown that equations \eqref{u1_xx} - \eqref{h1_xxx} reduce to the following relationship

\begin{equation}
    \frac{\upartial^2 u_1(x,y,t)}{\upartial x^2} = \frac{\upartial^2 v_1(x,y,t)}{\upartial x^2} = \frac{\upartial^3 h_1(x,y,t)}{\upartial x^3} = 0. \nonumber
\label{PartialDerivComp1_x_Cond}
\end{equation}

\noindent Continuing this argument for $n \in \mathbb{N}^{+}$ yields equation  \eqref{PartialDeriv_x_Cond}. Similar arguments can be made to produce \eqref{PartialDeriv_y_Cond} and \eqref{PartialDeriv_mixedxy_Cond}, respectively.
\end{proof}

\section{Proof of Lemma \ref{ADM:PartialDerivativeCoeffLemma_sin}}
\label{proof_ADM:PartialDerivativeCoeffLemma_sin}
\begin{proof}
    This is proven via  mathematical induction by examining the recursion relationships for $u$, $v$, and $h$ in \eqref{eq:ADM_iter_Linv}. Condition \eqref{PartialDeriv_x_Cond_sin} is demonstrated by examining the following relationships

\begin{equation}
\frac{\upartial u_{n+1}}{\upartial x}=-{L}_t^{-1}\left\{\frac{\upartial}{\upartial x}\left[A_n\left(u,\frac{\upartial u}{\upartial x}\right) + A_n\left(v,\frac{\upartial u}{\upartial y}\right) \right] + \frac{1}{F^2}\frac{\upartial^{2} h_{n}}{\upartial x^{2}} - \bar{f}\, \frac{\upartial v_n}{\upartial x}+\bar{\tau} \frac{\upartial u_n}{\upartial x}\right\},
\label{un_xx_sin}\nonumber
\end{equation}

\begin{equation}
\frac{\upartial v_{n+1}}{\upartial x} =-{L}_t^{-1}\left\{\frac{\upartial}{\upartial x} \left[A_n\left(u,\frac{\upartial v}{\upartial x}\right) + A_n\left(v,\frac{\upartial v}{\upartial y}\right) \right] + \frac{1}{F^2}\frac{\upartial^{2} h_{n}}{\upartial x \upartial y} + \bar{f} \, \frac{\upartial u_n}{\upartial x}+\bar{\tau}\frac{\upartial v_n}{\upartial x} \right\},
\label{vn_xx_sarin}\nonumber
\end{equation}

\noindent and 

\begin{equation}
\frac{\upartial h_{n+1}}{\upartial x} = - {L}_t^{-1}\left\{\frac{\upartial^{2} }{\upartial x^{2}}[A_n(u,h)] + \frac{\upartial^{2} }{\upartial x\upartial y}[A_n(v,h)]\right\}.\nonumber
\label{hn_xxx_sin}
\end{equation}

\noindent For $n= 0$, the relationship between the initial and first components for $u$, $v$, and $h$ become

\begin{equation}
\frac{\upartial u_{1}}{\upartial x}=-{L}_t^{-1}\left\{\frac{\upartial}{\upartial x}\left[A_0\left(u,\frac{\upartial u}{\upartial x}\right) + A_0\left(v,\frac{\upartial u}{\upartial y}\right) \right] + \frac{1}{F^2}\frac{\upartial^{2} h_{0}}{\upartial x^{2}} - \bar{f}\, \frac{\upartial v_0}{\upartial x}+\bar{\tau} \frac{\upartial u_0}{\upartial x}\right\},
\label{u1_xx_sin}
\end{equation}

\begin{equation}
\frac{\upartial v_{1}}{\upartial x} =-{L}_t^{-1}\left\{\frac{\upartial}{\upartial x} \left[A_0\left(u,\frac{\upartial v}{\upartial x}\right) + A_0\left(v,\frac{\upartial v}{\upartial y}\right) \right] + \frac{1}{F^2}\frac{\upartial^{2} h_{0}}{\upartial x \upartial y} + \bar{f} \, \frac{\upartial u_0}{\upartial x}+\bar{\tau} \frac{\upartial v_0}{\upartial x} \right\},
\label{v1_xx_sin}
\end{equation}

\noindent and 

\begin{equation}
\frac{\upartial h_{1}}{\upartial x} = - {L}_t^{-1}\left\{\frac{\upartial^{2} }{\upartial x^{2}}[A_0(u,h)] + \frac{\upartial^{2} }{\upartial x\upartial y}[A_0(v,h)]\right\}.
\label{h1_xxx_sin}
\end{equation}

\noindent Employing \eqref{PartialDerivIC_x_Cond_sin} - \eqref{PartialDerivIC_mixedxy_Cond_sin} it can be shown that equations \eqref{u1_xx_sin} - \eqref{h1_xxx_sin} reduce to the following relationship

\begin{equation}
    \frac{\upartial u_1(x,y,t)}{\upartial x} = \frac{\upartial v_1(x,y,t)}{\upartial x} = \frac{\upartial h_1(x,y,t)}{\upartial x} = 0.
\label{PartialDerivComp1_x_Cond}
\end{equation}

\noindent Continuing this argument for $n \in \mathbb{N}^{+}$ yields equation  \eqref{PartialDeriv_x_Cond_sin}. Following similar arguments yields \eqref{PartialDeriv_y_Cond_sin}.
\end{proof}

\section{Proof of Corollary \ref{cor:linear_growing}}
\label{proof_cor:linear_growing}

\begin{proof}
The initial behaviour of $\tilde{u}_0$, $\tilde{v}_0$, and $\tilde{h}_0$ also satisfy equations \eqref{PartialDerivIC_x_Cond_sin} - \eqref{PartialDerivIC_mixedxy_Cond_sin} in which imply that $u, v$ and $h$ also satisfy \eqref{PartialCondition_u_sin} - \eqref{ParticalCondition_h_sinB}. Furthermore, we observe that equation \eqref{eq:ODE_sin} is also achieved where we note that the temporal dynamics of  $\tilde{u}_0(t)$ and $\tilde{v}_0(t)$, in equations \eqref{eq:ODEsinA} and \eqref{eq:ODEsinB}, do not directly depend on $\tilde{h}_0(t)$ whereas the temporal dynamics of $\tilde{h}_0(t)$, expressed in equation \eqref{eq:ODEsinC}, solely depend on $\tilde{u}_0(t)$ and $\tilde{v}_0(t)$. Hence, \eqref{eq:ODE_sin} can be analysed by first considering   

\begin{align}
     \frac{\rm d}{{\rm d} t}\boldsymbol{\psi}_{uv}(t)=&\boldsymbol{A}_{uv}\boldsymbol{\psi}_{uv}(t)+\boldsymbol{H}_{uv},
    \label{eq:eqaution_decaying_inertial_geostrophic_Corollary}
\end{align} 

\noindent where

\begin{align}
    \boldsymbol{\psi}_{uv}(t)=\begin{bmatrix}
    \tilde{u}_0(t)\\
    \tilde{v}_0(t)\end{bmatrix},\;\; \boldsymbol{\psi}_{uv}(0)=\begin{bmatrix}
    \tilde{u}_0(0)\\
    \tilde{v}_0(0)\end{bmatrix},\;\;\boldsymbol{A}_{uv}=\begin{bmatrix}
    -\bar{\tau} & \bar{f}\\
    -\bar{f} & -\bar{\tau}
    \end{bmatrix},\;\;\text{and}\;\;\boldsymbol{H}_{uv}=-\frac{1}{F^2}\begin{bmatrix}
    \tilde{h}_x\\
    \tilde{h}_y
    \end{bmatrix},\nonumber
\end{align}

\noindent which yields

\begin{align}
    \boldsymbol{\psi}_{uv}(t)=&{\rm e}^{\boldsymbol{A}_{uv}t}\boldsymbol{\psi}_{uv}(0)+\int_0^t {\rm e}^{\boldsymbol{A}_{uv}(t-\xi)}\boldsymbol{H}_{uv}{\rm d}\xi \nonumber\\
    =&{\rm e}^{\boldsymbol{A}_{uv}t}\boldsymbol{\psi}_{uv}(0)+\boldsymbol{A}_{uv}^{-1}(e^{\boldsymbol{A}_{uv}t}-\boldsymbol{I})\boldsymbol{H}_{uv},\nonumber
\end{align}

\noindent where  $\boldsymbol{A}_{uv}$ is invertible because $\bar{f}^2+\bar{\tau}^2\neq 0$ and

\begin{align}
    {\rm e}^{\boldsymbol{A}_{uv}t}:=\begin{bmatrix}
    {\rm e}^{-\bar{\tau} t}\cos(\bar{f}t)\;\; & {\rm e}^{-\bar{\tau} t}\sin(\bar{f}t)\\
    -{\rm e}^{-\bar{\tau} t}\sin(\bar{f}t)\;\; & {\rm e}^{-\bar{\tau} t}\cos(\bar{f}t)
    \end{bmatrix}.\nonumber
\end{align}

\noindent Therefore, to understand long-term behaviour it suffices to examine the behaviour as $t\rightarrow \infty$. Hence, 

\begin{align}
    \boldsymbol{\psi}_{uv}(\infty)=&\underset{t\rightarrow \infty}{\text{lim}}\boldsymbol{\psi}_{uv}(t)\nonumber\\
    =&-\boldsymbol{A}_{uv}^{-1}\boldsymbol{H}_{uv}\nonumber\\
    =&  \frac{1}{F^2(\bar{f}^2+\bar{\tau}^2)}\begin{bmatrix}
    -\bar{f}\tilde{h}_y-\bar{\tau} \tilde{h}_x\\
    \bar{f}\tilde{h}_x-\bar{\tau} \tilde{h}_y
    \end{bmatrix} \nonumber
    \label{eq:linear_growing_uv}
\end{align}

\noindent which yields \eqref{eq:inertial_geostrophic_inf_time_u} and  \eqref{eq:inertial_geostrophic_inf_time_v}. Next, we consider the dynamic equation for $\tilde{h}_0(t)$ given by 

\begin{equation}
    \frac{{\rm d}}{{\rm d}t}\tilde{h}_0(t)=-\tilde{h}_x\tilde{u}_0(t)  -\tilde{h}_y\tilde{v}_0(t),
\end{equation}

\noindent where $\tilde{h}_0(0)$ is the corresponding initial condition. Therefore, as $t \rightarrow \infty$ the temporal dynamics of $\tilde{h}_0(t)$ becomes

\begin{align}
    \underset{t\rightarrow \infty}{\text{lim}}\frac{{\rm d}}{{\rm d}t}\tilde{h}_0(t)=&-\tilde{h}_x\tilde{u}_0(\infty)-\tilde{h}_y\tilde{v}_0(\infty) \nonumber\\
    =&\frac{\bar{\tau} (\tilde{h}_x^2+\tilde{h}_y^2)}{F^2(\bar{f}^2+\bar{\tau}^2)} \nonumber
\end{align}

\noindent which results in 

\begin{equation}
    \underset{t\rightarrow \infty}{\text{lim}} \,\tilde{h}_0(t)=\frac{\bar{\tau} (\tilde{h}_x^2+\tilde{h}_y^2)t}{F^2(\bar{f}^2+\bar{\tau}^2)}+\tilde{h}_0(0),\nonumber
\end{equation}

\noindent and \eqref{eq:inertial_geostrophic_inf_time_h} is achieved.
\end{proof}

\section{Proof of Lemma \ref{ADM:PartialDerivativeCoeffLemma_tan_u}}
\label{proof_ADM:PartialDerivativeCoeffLemma_tan_u}

\begin{proof}
    This is proven via mathematical induction by examining the recursion relationships for $u$, $v$, and $h$ in equation \eqref{eq:ADM_iter_Linv}. Condition \eqref{PartialDeriv_x_Cond_tan_u_fy} is demonstrated by examining

\begin{equation}
 u_{n+1}=-{L}_t^{-1}\left\{\left[A_n\left(u,\frac{\upartial u}{\upartial x}\right) + A_n\left(v,\frac{\upartial u}{\upartial y}\right) \right] + \frac{1}{F^2}\frac{\upartial h_{n}}{\upartial x} - \bar{f}\, v_n +\bar{\tau} u_n\right\}.
\label{un_xx_tan_u}\nonumber
\end{equation}

\noindent For $n=0$ with \eqref{PartialDerivIC_x_Cond_tan_u_fy} - \eqref{PartialDerivIC_mixedxy_Cond_tan_u} achieves

\begin{align}
 u_{1}&=-{L}_t^{-1}\left\{\left[A_0\left(u,\frac{\upartial u}{\upartial x}\right) + A_0\left(v,\frac{\upartial u}{\upartial y}\right) \right] + \frac{1}{F^2}\frac{\upartial h_{0}}{\upartial x} - \bar{f}\, v_0+\bar{\tau} u_0\right\}\nonumber\\
 &=-{L}_t^{-1}\left\{u_0\frac{\upartial u_0}{\upartial x}+v_0 \frac{\upartial u_0}{\upartial y} - \bar{f}\,v_0 +\bar{\tau} u_0\right\}\nonumber\\
 &=0.\nonumber
\label{u1_xx_tan_u}
\end{align}

\noindent Continuing this argument for $n = \{1, 2, \ldots, n - 1\}$ yields equation  \eqref{PartialDeriv_x_Cond_tan_u_fy}.

Condition \eqref{PartialDeriv_x_Cond_tan_u} is demonstrated by examining the following relationships

\begin{equation}
\frac{\upartial^2 v_{n+1}}{\upartial x^2} =-{L}_t^{-1}\left\{\frac{\upartial^2}{\upartial x^2} \left[A_n\left(u,\frac{\upartial v}{\upartial x}\right) + A_n\left(v,\frac{\upartial v}{\upartial y}\right) \right] + \frac{1}{F^2}\frac{\upartial^{3} h_{n}}{\upartial x^2 \upartial y} + \bar{f} \, \frac{\upartial^2 u_n}{\upartial x^2 }+\bar{\tau} \frac{\upartial^2 v_n}{\upartial x^2} \right\},
\label{vn_xx_tan_u}\nonumber
\end{equation}

\noindent and 

\begin{equation}
\frac{\upartial h_{n+1}}{\upartial x} = - {L}_t^{-1}\left\{\frac{\upartial^{2} }{\upartial x^2}[A_n(u,h)] + \frac{\upartial^{2} }{\upartial x \upartial y}[A_n(v,h)]\right\}.
\label{hn_xxx_tan_u}\nonumber
\end{equation}

\noindent For $n= 0$, the relationship between the initial and first components for $v$, and $h$ become

\begin{equation}
\frac{\upartial^2 v_{1}}{\upartial x^2} =-{L}_t^{-1}\left\{\frac{\upartial^2}{\upartial x^2} \left[A_0\left(u,\frac{\upartial v}{\upartial x}\right) + A_0\left(v,\frac{\upartial v}{\upartial y}\right) \right] + \frac{1}{F^2}\frac{\upartial^{3} h_0}{\upartial x^2 \upartial y} + \bar{f} \, \frac{\upartial^2 u_0}{\upartial x^2}+\bar{\tau}\frac{\upartial^2 v_0}{\upartial x^2} \right\},
\label{v1_xx_tan_u}
\end{equation}

\noindent and 

\begin{equation}
\frac{\upartial h_{1}}{\upartial x} = - {L}_t^{-1}\left\{\frac{\upartial^{2} }{\upartial x^2}[A_0(u,h)] + \frac{\upartial^{2} }{\upartial x \upartial y}[A_0(v,h)]\right\}.
\label{h1_xxx_tan_u}
\end{equation}

\noindent Employing \eqref{PartialDerivIC_x_Cond_tan_u_fy} - \eqref{PartialDerivIC_mixedxy_Cond_tan_u}  equations \eqref{v1_xx_tan_u} - \eqref{h1_xxx_tan_u} become

\begin{equation}
   \frac{\upartial^2 v_1(x,y,t)}{\upartial x^2} = \frac{\upartial h_1(x,y,t)}{\upartial x} = 0.
\label{PartialDerivComp1_x_Cond}
\end{equation}

\noindent Continuing this argument for $n = \{1, 2, \ldots, n - 1\}$ yields equation  \eqref{PartialDeriv_x_Cond_tan_u}. Following similar arguments yields \eqref{PartialDeriv_y_Cond_tan_u} and \eqref{PartialDeriv_mixedxy_Cond_tan_u}.
\end{proof}

\section{Proof of Lemma \ref{ADM:PartialDerivativeCoeffLemma_tan_v}}
\label{proof_ADM:PartialDerivativeCoeffLemma_tan_v}

\begin{proof}
    This is proven via mathematical induction by examining the recursion relationships for $u$, $v$, and $h$ give by equation \eqref{eq:ADM_iter_Linv}. Condition \eqref{PartialDeriv_x_Cond_tan_v_fx} is demonstrated by examining the following relationships

\begin{equation}
v_{n+1} =-{L}_t^{-1}\left\{ \left[A_n\left(u,\frac{\upartial v}{\upartial x}\right) + A_n\left(v,\frac{\upartial v}{\upartial y}\right) \right] + \frac{1}{F^2}\frac{\upartial h_{n}}{\upartial y} + \bar{f} \,  u_n +\bar{\tau} v_n \right\}.
\nonumber
\end{equation}

\noindent For $n=0$ and using \eqref{PartialDerivIC_x_tan_v_fx} - \eqref{PartialDerivIC_mixedxy_Cond_tan_v}, yields
\begin{align}
v_{1}& =-{L}_t^{-1}\left\{ \left[A_0\left(u,\frac{\upartial v}{\upartial x}\right) + A_0\left(v,\frac{\upartial v}{\upartial y}\right) \right] + \frac{1}{F^2}\frac{\upartial h_{0}}{\upartial y} + \bar{f} \,  u_0+\bar{\tau} v_0 \right\}\nonumber\\
&=-{L}_t^{-1}\left\{u_0\frac{\upartial v_0}{\upartial x}+v_0\frac{\upartial v_0}{\upartial y} +\bar{f} \,  u_0+\bar{\tau} v_0 \right\}\nonumber\\
&=0.\nonumber
\label{u1_xx_tan_v_fx}
\end{align}

\noindent Employing similar arguments for $n={1,2,\ldots,n-1}$, achieves \eqref{PartialDeriv_x_Cond_tan_v_fx}. 

Condition \eqref{PartialDeriv_x_Cond_tan_v} is demonstrated by examining the following 

\begin{equation}
\frac{\upartial^{2} u_{n+1}}{\upartial x^{2}}=-{L}_t^{-1}\left\{\frac{\upartial^{2}}{\upartial x^{2}}\left[A_n\left(u,\frac{\upartial u}{\upartial x}\right) + A_n\left(v,\frac{\upartial u}{\upartial y}\right) \right] + \frac{1}{F^2}\frac{\upartial^{3} h_{n}}{\upartial x^{3}} - \bar{f}\, \frac{\upartial^{2}v_n}{\upartial x^{2}}+\bar{\tau} \frac{\upartial^2 u_n}{\upartial x^2}\right\},
\label{un_xx_tan_v}\nonumber
\end{equation}

\noindent and 

\begin{equation}
\frac{\upartial h_{n+1}}{\upartial x} = - {L}_t^{-1}\left\{\frac{\upartial^{2} }{\upartial x^{2}}[A_n(u,h)] + \frac{\upartial^{2} }{\upartial x \upartial y}[A_n(v,h)]\right\}.
\label{hn_xxx_tan_v}\nonumber
\end{equation}

\noindent For $n= 0$, the relationship between the initial and first components for $u$, and $h$ become

\begin{equation}
\frac{\upartial^{2} u_{1}}{\upartial x^{2}}=-{L}_t^{-1}\left\{\frac{\upartial^{2}}{\upartial x^{2}}\left[u_{0}\frac{\upartial u_{0}}{\upartial x} \,  + v_{0}\frac{\upartial u_{0}}{\upartial y} \right] + \frac{1}{F^2}\frac{\upartial^{3} h_{0}}{\upartial x^{3}} - \bar{f}\, \frac{\upartial^{2}v_{0}}{\upartial x^{2}}+\bar{\tau} \frac{\upartial^2 u_0}{\upartial x^2}\right\},
\label{u1_xx_tan_v}
\end{equation}

\noindent and 

\begin{equation}
\frac{\upartial h_{1}}{\upartial x} = - {L}_t^{-1}\left\{\frac{\upartial^{2} }{\upartial x^{2}}[A_0(u,h)] + \frac{\upartial^{2} }{\upartial x \upartial y}[A_0(v,h)]\right\}.
\label{h1_xxx_tan_v}
\end{equation}

\noindent Employing \eqref{PartialDerivIC_x_tan_v_fx} -   \eqref{PartialDerivIC_mixedxy_Cond_tan_v}, equations \eqref{u1_xx_tan_v} - \eqref{h1_xxx_tan_v} yield the following relationship

\begin{equation}
    \frac{\upartial^2 u_1(x,y,t)}{\upartial x^2} = \frac{\upartial h_1(x,y,t)}{\upartial x} = 0.
\label{PartialDerivComp1_x_Cond_tan_v}\nonumber
\end{equation}

\noindent Continuing this argument for $n = \{1, 2, \ldots, n - 1\}$ achieves   \eqref{PartialDeriv_x_Cond_tan_v}. Similar arguments yield \eqref{PartialDeriv_y_Cond_tan_v} and \eqref{PartialDeriv_mixedxy_Cond_tan_v}.
\end{proof}

\end{document}